\newcommand\set[1]{\{ #1 \}}
\newcommand\tuple[1]{{( #1 )}}
\newcommand{\REFlem}[1]{\text{Lem.~\ref{#1}}}
\newcommand{\REFthm}[1]{\text{Thm.~\ref{#1}}}
\newcommand{\REFalg}[1]{Alg.~\ref{#1}}
\newcommand{\REFsec}[1]{Sec.~\ref{#1}}
\newcommand{\REFfig}[1]{Fig.~\ref{#1}}
\newcommand{\REFass}[1]{Assumption~\ref{#1}}
\newcommand{\Cpre}{\mathrm{CPre}}
\newcommand{\FCpre}[1]{\Cpre_{\Sa{#1}}}
\newcommand{\Upre}[1]{\mathrm{Pre}_{#1}}
\newcommand{\FUpre}[1]{\Upre{\Sa{#1}}}
\newcommand{\JUpre}[1]{\Upre{\Aa{#1}}}
\newcommand{\OA}[1]{\Gamma^\uparrow_{#1}}
\newcommand{\UA}[1]{\Gamma^\downarrow_{#1}}
\newcommand{\ON}[1]{\operatorname{#1}}
\def\clap#1{\hbox to 0pt{\hss#1\hss}}
\newif\ifFIRST
\newif\ifSECOND
\let\LISTOP\relax
\newcommand{\List}[4][\;]{#3#1%
	\FIRSTtrue
	\@for\i:=#2\do{%
	\ifFIRST\LISTOP{\i}\FIRSTfalse\else,\LISTOP{\i}\fi%
	}%
	#1#4%
	\let\LISTOP\relax
}
\newcounter{DINGLIST}
\newcommand{\markD}[3][\;\;]{\text{\ding{\the\numexpr171+\theDINGLIST}\stepcounter{DINGLIST}}#1#3}
\newcommand{\propNeg}{\@ifstar\propNegStar\propNegNoStar}
\newcommand{\propNegStar}[1]{\ensuremath{\left(\propNegNoStar{#1}\right)}}
\newcommand{\propNegNoStar}[2][\cdot]{\ensuremath{\neg\ifthenelse{\isempty{#2}}{#1}{#2}}}
\newcommand{\propConj}{\@ifstar\propConjStar\propConjNoStar}
\newcommand{\propConjStar}[2]{\ensuremath{\left(\propConjNoStar{#1}{#2}\right)}}
\newcommand{\propConjNoStar}[3][\cdot]{\ensuremath{\ifthenelse{\isempty{#2}}{#1}{#2}\wedge\ifthenelse{\isempty{#3}}{#1}{#3}}}
\newcommand{\propDisj}{\@ifstar\propDisjStar\propDisjNoStar}
\newcommand{\propDisjStar}[2]{\ensuremath{\left(\propDisjNoStar{#1}{#2}\right)}}
\newcommand{\propDisjNoStar}[3][\cdot]{\ensuremath{\ifthenelse{\isempty{#2}}{#1}{#2}\vee\ifthenelse{\isempty{#3}}{#1}{#3}}}
\newcommand{\propImp}{\@ifstar\propImpStar\propImpNoStar}
\newcommand{\propImpStar}[2]{\ensuremath{\left(\propImpNoStar{#1}{#2}\right)}}
\newcommand{\propImpNoStar}[3][\cdot]{\ensuremath{\ifthenelse{\isempty{#2}}{#1}{#2}\Rightarrow\ifthenelse{\isempty{#3}}{#1}{#3}}}
\newcommand{\propAequ}{\@ifstar\propAequStar\propAequNoStar}
\newcommand{\propAequStar}[2]{\ensuremath{\left(\propAequNoStar{#1}{#2}\right)}}
\newcommand{\propAequNoStar}[3][\cdot]{\ensuremath{\ifthenelse{\isempty{#2}}{#1}{#2}\Leftrightarrow\ifthenelse{\isempty{#3}}{#1}{#3}}}
\newcommand{\AllQ}{\@ifstar\AllQStar\AllQNoStar}
\newcommand{\AllQStar}[3][\;]{\ensuremath{\left(\forall #2#1.#1#3\right)}}
\newcommand{\AllQNoStar}[3][\;]{\ensuremath{\forall #2#1.#1#3}}
\newcommand{\AllQu}{\@ifstar\AllQuStar\AllQuNoStar}
\newcommand{\AllQuStar}[3][\;]{\ensuremath{\left(\forall^{\infty} #2#1.#1#3\right)}}
\newcommand{\AllQuNoStar}[3][\;]{\ensuremath{\forall^{\infty} #2#1.#1#3}}
\newcommand{\ExQ}{\@ifstar\ExQStar\ExQNoStar}
\newcommand{\ExQStar}[3][\;]{\ensuremath{\left(\exists #2#1.#1#3\right)}}
\newcommand{\ExQNoStar}[3][\;]{\ensuremath{\exists #2#1.#1#3}}
\newcommand{\NExQ}{\@ifstar\NExQStar\NExQNoStar}
\newcommand{\NExQStar}[3][\;]{\ensuremath{\left(\nexists #2#1.#1#3\right)}}
\newcommand{\NExQNoStar}[3][\;]{\ensuremath{\nexists #2#1.#1#3}}
\newcommand{\UniqueQ}{\@ifstar\UniqueQStar\UniqueQNoStar}
\newcommand{\UniqueQStar}[3][\;]{\ensuremath{\left(\exists! #2#1.#1#3\right)}}
\newcommand{\UniqueQNoStar}[3][\;]{\ensuremath{\exists! #2#1.#1#3}}
\newenvironment{propConjA}{\left(\def\unionAtest{1}\begin{array}{@{\if\unionAtest1\gdef\unionAtest{0}\phantom{\wedge}\else\wedge\fi}l@{}}}{\end{array}\right)}
  \newlength{\SFS@HEIGHT}
  \newlength{\SFS@WIDTH}
  \newcommand{\SplitX}[2]{
	    \settoheight{\SFS@HEIGHT}{$#2$}
	    \settowidth{\SFS@WIDTH}{$#2$}
	    \mbox{\begin{tikzpicture}[baseline=(current bounding box.center)]
	    \node[] (E) at (0,0) {$#1$};
	    \node[inner sep=0pt] (F) at ($(E.south west)+(1ex,-1ex)+(3ex+.5\SFS@WIDTH,-\SFS@HEIGHT)$) {$#2$};
	    \node[] (E) at (0,0) {\phantom{$#1$}};
	    \draw[fill] ($(E.east)+(1ex,0ex)$) circle (.2ex);
	    \draw[-] ($(E.east)+(1ex,0ex)$) -- ($(E.south east)+(1ex,-0.5ex)$) -- ($(E.south west)+(1ex,-0.5ex)$) -- ($(E.south west)+(1ex,-1ex)-(0,\SFS@HEIGHT)$) -- ($(E.south west)+(2.5ex,-1ex)-(0,\SFS@HEIGHT)$);
	    \draw[fill] ($(E.south west)+(2.5ex,-1ex)-(0,\SFS@HEIGHT)$) circle (.2ex);
	    \end{tikzpicture}}}
  \newcommand{\SplitS}[2]{
	    \settoheight{\SFS@HEIGHT}{$#2$}
	    \settowidth{\SFS@WIDTH}{$#2$}
	    \mbox{\begin{tikzpicture}[baseline=(current bounding box.center)]
	    \node[] (E) at (0,0) {$#1$};
	    \node[inner sep=0pt] (F) at ($(E.south west)+(1ex,0.5ex)+(3ex+.5\SFS@WIDTH,-\SFS@HEIGHT)$) {$#2$};
	    \end{tikzpicture}}}
\newcommand{\semantics}[1]{\langle\![#1]\!\rangle}
\newcommand{\Set}[2][]{\List[#1]{#2}{\left\{}{\right\}}}
\newcommand{\VSet}[2][]{\let\LISTOP\val\List[#1]{#2}{\{}{\}}}
\newcommand{\VTuple}[2][]{\let\LISTOP\val\List[#1]{#2}{(}{)}}
\newcommand{\UNION}{\@ifstar\UNIONStar\UNIONNoStar}
\newcommand{\UNIONStar}[2]{\ensuremath{\left(\UNIONNoStar{#1}{#2}\right)}}
\newcommand{\UNIONNoStar}[2]{\ensuremath{\ifthenelse{\isempty{#1}}{\cdot}{#1}\cup\ifthenelse{\isempty{#2}}{\cdot}{#2}}}
\newcommand{\UNIOND}{\@ifstar\UNIONDStar\UNIONDNoStar}
\newcommand{\UNIONDStar}[2]{\ensuremath{\left(\UNIONDNoStar{#1}{#2}\right)}}
\newcommand{\UNIONDNoStar}[2]{\ensuremath{\ifthenelse{\isempty{#1}}{\cdot}{#1}\uplus\ifthenelse{\isempty{#2}}{\cdot}{#2}}}
\newcommand{\SETMINUS}{\@ifstar\SETMINUSStar\SETMINUSNoStar}
\newcommand{\SETMINUSStar}[2]{\ensuremath{\left(\SETMINUSNoStar{#1}{#2}\right)}}
\newcommand{\SETMINUSNoStar}[2]{\ensuremath{\ifthenelse{\isempty{#1}}{\cdot}{#1}\setminus\ifthenelse{\isempty{#2}}{\cdot}{#2}}}
\newcommand{\INTERSECT}{\@ifstar\INTERSECTStar\INTERSECTNoStar}
\newcommand{\INTERSECTStar}[2]{\ensuremath{\left(\INTERSECTNoStar{#1}{#2}\right)}}
\newcommand{\INTERSECTNoStar}[2]{\ensuremath{\ifthenelse{\isempty{#1}}{\cdot}{#1}\cap\ifthenelse{\isempty{#2}}{\cdot}{#2}}}
\newcommand{\CARTPROD}{\@ifstar\CARTPRODStar\CARTPRODNoStar}
\newcommand{\CARTPRODStar}[2]{\ensuremath{\left(\CARTPRODNoStar{#1}{#2}\right)}}
\newcommand{\CARTPRODNoStar}[2]{\ensuremath{\ifthenelse{\isempty{#1}}{\cdot}{#1}\times\ifthenelse{\isempty{#2}}{\cdot}{#2}}}
\newcommand{\FINCOUNT}{\@ifstar\FinCountStar\FinCountNoStar}
\newcommand{\FinCountStar}[1]{\ensuremath{\#(\ifthenelse{\isempty{#1}}{\cdot}{#1})}}
\newcommand{\FinCountNoStar}[1]{\ensuremath{\#\left(\ifthenelse{\isempty{#1}}{\cdot}{#1}\right)}}
\newcommand{\real}[1]{\ifstrempty{#1}{\mathbb{R}}{\mathbb{R}^{#1}}}
\newcommand{\Z}{\mathbb{Z}}
\newcommand{\N}{\mathbb{N}}
\newcommand{\fun}{\ensuremath{\ON{\rightarrow}}}
\newcommand{\SetComp}[3][]{\{#1#2#1\mid#1#3#1\}}
\newcommand{\SetCompX}[3][]{\left\{#1#2#1\middle\vert#1#3#1\right\}}
\newcommand{\twoup}[1]{\ensuremath{2^{#1}}} 
\newcommand{\dom}[1]{\ensuremath{\mathrm{dom}(#1)}}
\renewcommand{\max}[1]{\ensuremath{\mathrm{max}(#1)}}
\newcommand{\Beh}[1]{\ensuremath{\mathcal{B}(#1)}}
\newcommand{\Spec}{\ensuremath{\psi}}
\newcommand{\Specr}{\ensuremath{\Spec_\mathrm{reach}}}
\newcommand{\Specs}{\ensuremath{\Spec_\mathrm{safe}}}
 \newcommand{\Behaclset}{\ensuremath{\mathcal{B}(\Saset^{cl})}}
 \newcommand{\Behtclset}{\ensuremath{\mathcal{B}(\Stset^{cl})}}
\newcommand{\hyint}[1]{\ensuremath{\llbracket #1 \,\rrbracket}}
\newcommand{\tn}[1]{{#1}}
\newcommand{\nindex}[1]{{#1}}
\newcommand{\n}[1]{{\eta_{#1}}}
\newcommand{\ta}[1]{{\tau_{#1}}}
\newcommand{\tindex}[1]{{#1}}
\newcommand{\frr}[1]{\preccurlyeq_{#1}}
\newcommand{\layer}{\ensuremath{{l\in[1;L]}}\xspace}
\newcommand{\llayer}{\ensuremath{{l'\in[1;L]}}\xspace}
\newcommand{\controllers}{\ensuremath{{p\in[1;P]}}\xspace}
\newcommand{\li}{\ensuremath{l_p}\xspace}
\newcommand{\etamin}{\ensuremath{\underline{\eta}}}
\newcommand{\taumin}{\ensuremath{\underline{\tau}}}
\newcommand{\St}[1]{\ensuremath{\overrightarrow{S}_{\tindex{#1}}}}
\newcommand{\Ft}[1]{\ensuremath{\overrightarrow{F}_{\tindex{#1}}}}
\newcommand{\Sa}[1]{\ensuremath{\widehat{S}_{\tn{#1}}}}
\newcommand{\Aa}[1]{\ensuremath{\widehat{A}^L_{\tn{#1}}}}
\newcommand{\Xa}[1]{\ensuremath{\widehat{X}_{\nindex{#1}}}}
\newcommand{\Oa}[1]{\ensuremath{\widehat{O}_{\nindex{#1}}}}
\newcommand{\Ua}{\ensuremath{\widehat{U}}}
\newcommand{\Fa}[1]{\ensuremath{\widehat{F}_{\tn{#1}}}}
\newcommand{\Ja}[1]{\ensuremath{\widehat{F}_{\tn{#1}}^L}}
\newcommand{\Qa}[1]{\ensuremath{\widehat{Q}_{\nindex{#1}}}}
\newcommand{\Qai}[1]{\ensuremath{\widehat{Q}^{-1}_{\nindex{#1}}}}
\newcommand{\Ra}[1]{\ensuremath{\widehat{R}_{#1}}}
\newcommand{\xa}{\ensuremath{\widehat{x}}}
\newcommand{\ya}{\ensuremath{\widehat{y}}}
\newcommand{\xia}{\ensuremath{\widehat{\xi}}}
\newcommand{\ua}{\ensuremath{\widehat{u}}}
\newcommand{\Ci}[1]{\ensuremath{C^{#1}}}
\newcommand{\Uci}[1]{\ensuremath{D^{#1}}}
\newcommand{\Gci}[1]{\ensuremath{G^{#1}}}
\newcommand{\C}{\ensuremath{C}}
\newcommand{\Uc}{\ensuremath{D}}
\newcommand{\Gc}{\ensuremath{G}}
\newcommand{\Saclall}{\ensuremath{\Saset^{cl}}}
\newcommand{\Stclall}{\ensuremath{\Stset^{cl}}}
\newcommand{\Xaall}{\ensuremath{\widehat{\textbf{X}}}}
\newcommand{\WIN}{\ensuremath{\mathcal{C}}}
\newcommand{\Aux}{\ensuremath{\Upsilon}}
\newcommand{\Cset}{\ensuremath{\mathbf{C}}}
\newcommand{\Ucset}{\ensuremath{\mathbf{D}}}
\newcommand{\Stset}{\ensuremath{\overrightarrow{\mathbf{S}}}}
\newcommand{\Saset}{\ensuremath{\widehat{\mathbf{S}}}}
\newcommand{\Aaset}{\ensuremath{\widehat{\mathbf{A}}}}
\newcommand{\Qset}{\ensuremath{\mathbf{Q}}}
\newcommand{\Ftall}{\ensuremath{\overrightarrow{\mathbf{F}}}}
\newcommand{\Faall}{\ensuremath{\widehat{\mathbf{F}}}}
\newcommand{\MLSAFE}{\mathsf{EagerSafe}\xspace}
\newcommand{\MLASAFE}{\mathsf{LazySafe}\xspace}
\newcommand{\SAFE}{\mathsf{Safe}\xspace}
\newcommand{\OTFAR}{\mathsf{ReachIteration}\xspace}
\newcommand{\MLREACH}{\mathsf{EagerReach}\xspace}
\newcommand{\REACH}{\mathsf{Reach}\xspace}
\newcommand{\COMPUTE}{\mathsf{ComputeTransitions}\xspace}
\newcommand{\EXPLORE}{\mathsf{ExpandAbstraction}\xspace}
\newcommand{\MLR}{\mathsf{EagerReach}\xspace}
\newcommand{\MLAR}{\mathsf{LazyReach}\xspace}
\newcommand{\OTFAS}{\mathsf{SafeIteration}\xspace}
\newcommand{\Tset}{\ensuremath{T}}
\newcommand{\Rset}{\ensuremath{R}}
\newcommand{\Tseta}[1]{\ensuremath{\widehat{\Tset}_{#1}}}
\newcommand{\Rseta}[1]{\ensuremath{\widehat{\Rset}_{#1}}}
\newcommand{\Ya}[1]{\ensuremath{\widehat{Y}_{\tn{#1}}}}
\def\goal{{\mathsf{G}}}
\def\obstacle{{\mathsf{O}}}
\newtheorem{assumption}{Assumption}
\def\thanks#1{\protected@xdef\@thanks{\@thanks
        \protect\footnotetext{#1}}}
\begin{document}
\title{Lazy Abstraction-Based Controller Synthesis\thanks{
	This research was sponsored in part by the DFG project {389792660-TRR 248}
	and by the ERC Grant Agreement {610150} (ERC Synergy Grant ImPACT).
	Kyle Hsu was funded by a DAAD-RISE scholarship.
}
}

\author{Kyle Hsu\inst{1} \and Rupak Majumdar\inst{2} \and Kaushik Mallik\inst{2} \and Anne-Kathrin Schmuck\inst{2}}
\institute{
{University of Toronto, Canada\\
\email{\tt kyle.hsu@mail.utoronto.ca}
}
\and
{MPI-SWS, Kaiserslautern, Germany\\
\email{{\tt \{rupak,kmallik,akschmuck\}@mpi-sws.org}}
}
}

% make the title area
\maketitle

\setlength{\abovedisplayskip}{3pt}
\setlength{\belowdisplayskip}{3pt}

% \begin{abstract}
% \sloppy We present \emph{lazy} abstraction-based controller synthesis (ABCS) for continuous-time nonlinear dynamical
% systems against reach-avoid and safety specifications, and, by combining the two, enable lazy ABCS w.r.t.\ full LTL specifications.
% State-of-the-art multi-layered ABCS pre-computes multiple 
% finite-state abstractions of varying granularity and applies
% reactive synthesis to the coarsest abstraction whenever 
% feasible, but adaptively considers finer abstractions when necessary. 
% Lazy ABCS improves this technique by constructing abstractions on demand.
% Our insight is that the abstract transition relation only needs to be locally computed for a small set of \emph{frontier} states at the precision currently required by the synthesis algorithm. 
% We show that lazy ABCS can significantly outperform previous multi-layered ABCS algorithms: on standard benchmarks, lazy ABCS 
% is more than 4 times faster.
% \end{abstract}

\section{Introduction}\label{sec:Introduction}

Abstraction-based controller synthesis (ABCS) is a general procedure for automatic synthesis of
controllers for continuous-time nonlinear dynamical systems against temporal specifications.
ABCS works by first abstracting a time-sampled version of the continuous dynamics of the open-loop system by a symbolic finite state model. 
Then, it computes  a finite-state controller for the symbolic model
using algorithms from automata-theoretic reactive synthesis.
When the time-sampled system and the symbolic model satisfy a certain refinement relation, 
the abstract controller can be refined to a controller for the original continuous-time system
while guaranteeing its time-sampled behavior satisfies the temporal specification. 
Since its introduction about 15 years ago, much research has gone into better theoretical understanding
of the basic method and extensions \cite{TabuadaBook,nilsson2017augmented,ReissigWeberRungger_2017_FRR,mallik2018compositional,coogan2015efficient,GruberKA17}, into scalable tools \cite{Scots,mouelhi2013cosyma,DBLP:conf/hybrid/KhaledZ19,li2018rocs}, and 
into demonstrating its applicability to nontrivial control problems \cite{ames2015first,DBLP:journals/tcst/NilssonHBCAGOPT16,saoud2018contract,borri2013decentralized}.

In its most common form, the abstraction of the continuous-time dynamical system is computed by fixing a parameter $\tau$ 
for the sampling time and a parameter $\eta$ for the state 
% and input spaces, 
space,
and then representing the abstract state space as a set of hypercubes, each of diameter $\eta$. 
The hypercubes partition the continuous concrete state space. 
The abstract transition relation adds a transition between two hypercubes if there 
exists some state in the first hypercube and some control input that can reach some state of the second by following the original dynamics for time $\tau$. 
The transition relation is nondeterministic 
due to (a) the possibility of having continuous transitions starting at two different points in one hypercube but ending in different hypercubes, and
(b) the presence of external disturbances causing a deviation of the system trajectories from their nominal paths.
%since the original dynamics can be subject to external disturbances, a pair of states and inputs can lead to multiple possible trajectories.
When restricted to a compact region of interest, the resulting finite-state abstract system describes a two-player game between controller 
and disturbance, and reactive synthesis techniques are used to algorithmically compute a controller (or show that no such controller exists for the given abstraction) for members of a broad class of temporal specifications against the disturbance.
One can show that the abstract transition system is in a \emph{feedback refinement relation} (FRR) with the original dynamics \cite{ReissigWeberRungger_2017_FRR}.
This ensures that when the abstract controller is applied to the original system, the time-sampled behaviors satisfy the temporal specifications.

The success of ABCS depends on the choice of $\eta$ and $\tau$. 
Increasing $\eta$ (and $\tau$) results in a smaller state space and symbolic model, but more nondeterminism.
Thus, there is a tradeoff between computational tractability and successful synthesis.
We have recently shown that one can explore this space of tradeoffs by maintaining multiple abstraction layers 
of varying granularity (i.e., abstract models constructed from progressively larger $\eta$ and $\tau$) \cite{hsu2018multi}. 
The multi-layered synthesis approach tries to find a controller for the coarsest abstraction whenever feasible, 
but adaptively considers finer abstractions when necessary. 
However, the bottleneck of our approach \cite{hsu2018multi} is that the abstract transition system of every layer 
needs to be fully computed before synthesis begins.
This is expensive and wasteful.
The cost of abstraction grows as $O((\frac{1}{\eta})^n)$, where $n$ is the dimension, and much of the abstract state space
may simply be irrelevant to ABCS, for example, if a controller was already found at a coarser level for a sub-region.

We apply the paradigm of \emph{lazy abstraction} \cite{henzinger2002lazy} to multi-layered synthesis for safety specifications in \cite{hsu2018lazy}.
Lazy abstraction is a technique to systematically and efficiently explore large state spaces through abstraction and refinement, 
and is the basis for successful model checkers for software, hardware, and timed systems 
\cite{beyer2007software,beyer2011cpachecker,vizel2012lazy,HSW13}. 
Instead of computing all the abstract transitions for the entire system in each layer, 
the algorithm selectively chooses which portions to compute transitions for, 
avoiding doing so for portions that have been already solved by synthesis. 
This co-dependence of the two major computational components of ABCS is both conceptually appealing and results in significant performance benefits.

This paper gives a concise presentation of the underlying principles of lazy ABCS enabling synthesis w.r.t.\ 
safety and reachability specifications.
Notably, the extension from single-layered to multi-layered and lazy ABCS is somewhat nontrivial, for the following reasons.

\textbf{(I) Lack of FRR between Abstractions.} 
An efficient multi-layered controller synthesis algorithm uses coarse grid cells almost everywhere in the state space and only resorts 
to finer grid cells where the trajectory needs to be precise. 
While this idea is conceptually simple, the implementation is challenging as the computation of such a multi-resolution controller domain via 
established abstraction-refinement techniques (as in, e.g., \cite{AlfaroRoy_2010}), requires one to run the fixed-point 
algorithms of reactive synthesis over a common game graph representation connecting abstract states of different coarseness. 
However, to construct the latter, a simulation relation must exist between any two abstraction layers.
Unfortunately, this is not the case in our setting: each layer uses a different sampling time
and, while each layer is an abstraction (at a different time scale) of the original system, layers may not have any FRR between themselves. Therefore, we can only run iterations of fixed-points within a particular abstraction layer, but not for combinations of them.
 
We therefore introduce novel fixed-point algorithms for safety and reach-avoid specifications.
Our algorithms save and re-load the results of one fixed-point iteration to and from the lowest (finest) abstraction
layer, which we denote as layer 1.
This enables arbitrary switching of layers between any two sequential iterations while reusing work from other layers. 
We use this mechanism to design efficient switching protocols which ensure that synthesis is done mostly over 
coarse abstractions while fine layers are only used if needed. 

\textbf{(II) Forward Abstraction and Backward Synthesis.} 
One key principle of lazy abstraction is that the abstraction is computed in the direction of the search. 
However, in ABCS, the abstract transition relation can only be computed \emph{forward},
since it involves simulating the ODE of the dynamical system forward up to the sampling time.
While an ODE can also be solved backwards in time,  
backward computations of reachable sets using numerical methods
may lead to high numerical errors \cite[Remark~1]{mitchell2007comparing}.
Forward abstraction conflicts with symbolic reactive synthesis algorithms, which work \emph{backward} by iterating controllable
predecessor operators.\footnote{
	One can design an enumerative forward algorithm for controller synthesis, essentially as a backtracking
	search of an AND-OR tree \cite{cassez2007efficient}, but dynamical perturbations greatly increase the width of the tree. 
	% And it is not known if such a forward algorithm can be implemented symbolically.	
	Experimentally, this leads to poor performance in control examples. % we later demonstrate empirically, this results in computational intractability.
}
For reachability specifications, we solve this problem by keeping a set of \emph{frontier} states, and proving that in the backward controllable
predecessor computation, all transitions that need to be considered arise out of these frontier states.
Thus, we can construct the abstract transitions lazily by computing the finer abstract transitions only for the frontier. 

\textbf{(III) Proof of Soundness and Relative Completeness.} 
The proof of correctness for common lazy abstraction techniques uses the property that there is a simulation
relation between any two abstraction layers \cite{ClarkeGJLV03,henzinger2003counterexample}. 
As this property does not hold in our setting (see (I)), our proofs of soundness and completeness w.r.t.\ the finest layer
only use (a) FRRs between any abstraction layer and the concrete system to argue about the correctness of a controller
in a sub-space, and combines this with
(b) an argument about the structure of ranking functions, which are obtained from 
fixed point iterations and combine the individual controllers.

\smallskip
\noindent\textbf{Related Work.}
Our work is an extension and consolidation of several similar attempts at using
multiple abstractions of varying granularity in the context of controller synthesis, including our own prior work \cite{hsu2018multi,hsu2018lazy}. 
Similar ideas were explored in the context of \emph{linear} dynamical systems \cite{aydin2012language,girard2006towards}, which enabled 
the use of polytopic approximations.
For \emph{unperturbed} systems \cite{aydin2012language,rungger2012fly,grune1997adaptive,pola2012integrated,CameraGirardGoessler_reach_2011,Girard2016_InterSampling},
one can implement an efficient forward search-based synthesis technique, thus easily enabling lazy abstraction (see (II) above).
For nonlinear systems satisfying a stability property, \cite{Girard2016_InterSampling,CameraGirardGoessler_safety_2011,CameraGirardGoessler_reach_2011} show
a multi-resolution algorithm. It is implemented in the tool CoSyMA \cite{mouelhi2013cosyma}. 
For \emph{perturbed nonlinear systems}, \cite{fribourg2013constructing,fribourg2014finite} show a successive decomposition technique based on a linear
approximation of given polynomial dynamics.
Recently, Nilsson et al.~\cite{nilsson2017augmented,bulancea2018nonuniform} presented an abstraction-refinement technique for perturbed nonlinear systems 
which shares with our approach the idea of using particular candidate states for \emph{local} refinement. 
However, the approaches differ by the way abstractions are constructed.
The approach in \cite{nilsson2017augmented} identifies all adjacent cells which are reachable using a particular input, 
splitting these cells for finer abstraction computation. 
This is analogous to established abstraction-refinement techniques for solving two-player games \cite{AlfaroRoy_2010}. 
On the other hand, our method computes reachable sets for particular sampling times that 
vary across layers, resulting in a more delicate abstraction-refinement loop. 
% Our approach also handles more general dynamics and establishes feedback-refinement relations, which 
% are advantageous for controller design (see \cite{ReissigWeberRungger_2017_FRR} for an in-depth discussion).

\begin{figure*}[t!]
	\centering
		\input{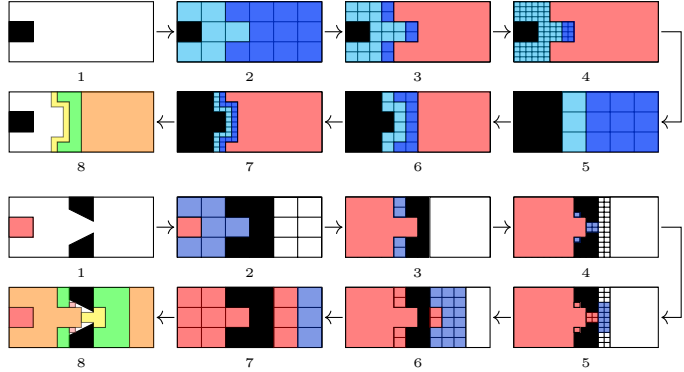} 
% 	\vspace{-0.3cm}
	\caption{An illustration of the lazy ABCS algorithms for safety (top) and reach-avoid (bottom) specifications.
	In both scenarios, the solid black regions are the unsafe states which need to be avoided. In the reach-avoid problem, the system has to additionally reach the target ($T$) red square at the left of Pic.~1.
	Both figures show the sequence of synthesis stages across three abstraction layers: $l=1$ (Pics.~4,~7), $l=2$ (Pics.~3,~6), and $l=3$ (Pics.~2,~5) for safety; and $l=1$ (Pics.~4,~5), $l=2$ (Pics.~3,~6), and $l=3$ (Pics.~2,~7) for reach-avoid. 
	Both Pics.~8 indicate the domains of the resulting controllers with different granularity: $l=1$ (yellow), $l=2$ (green), and $l=3$ (orange).
	The red regions represent winning states, and the blue regions represent states determined as winning in the present synthesis stage. Cyan regions represent ``potentially losing'' states in the safety synthesis.
	We set the parameter $m=2$ for reach-avoid synthesis.
	The gridded regions in different layers represent the states where the transitions have been computed; large ungridded space in $l=2$ and $l=1$ signifies the computational savings of the lazy abstraction approach.
	}
	\label{fig:informal example}
	\vspace{-0.5cm}
\end{figure*}
% \begin{figure}[ht!]
% 	\centering
% 		\input{./Figures/Fig_FC}
% 		\vspace{-0.4cm}
% % 		\hfill
% % 		\input{./Figures/FigMLS_FC}
% 	\caption{Flowchart of lazy ABCS for safety (left) and reach-avoid (right) specifications w.r.t.\ a target ($T$) and a safe set ($R$) given as pseudo-code in \REFalg{alg:SafeIt} and \REFalg{alg:OTFA subroutine}, respectively. Both algorithms pre-compute the coarsest abstraction within the safe set (Block B) and switch between different layers $\layer$ in every iteration (left) and every $m$-th iteration (right) of the respective fixed-point algorithms $\SAFE$ and $\REACH$, respectively. $\REACH_{\infty}$ runs the reachability fixed-point until convergence in layer $L$; $m>0$ is a tuning parameter for adjusting the frontier size.}
% 	\label{fig:flowchart}
% \end{figure}

% \begin{figure}
% 	\centering
% 		\input{./Figures/FigMLS_FC}
% 	\caption{Flowchart of safety}
% \end{figure}

\smallskip
\noindent\textbf{Informal Overview.}
We illustrate our approach via solving a safety control problem and a reach-avoid control problem depicted in \REFfig{fig:informal example}.
In reactive synthesis, these problems are solved using a maximal and minimal fixed point computation, respectively \cite{MPS95}.
Thus, for safety, one starts with the maximal set of safe states and iteratively shrinks the latter until the remaining set, 
called the winning state set, does not change.
That is, for all states in the winning state set, there is a control action which ensures that the system remains within this set for one step. 
For reachability, one starts with the set of target states as the winning ones and iteratively enlarges this set by adding all states 
which allow the system to surely reach the current winning state set, until no more states can be added. 
These differences in the underlying characteristics of the fixed points require different switching protocols when multiple abstraction layers are used.

The purpose of \REFfig{fig:informal example}
is only to convey the basic idea of our algorithms in a visual and lucid way, 
without paying attention to the details of the underlying dynamics of the system.
In our example, we use three layers of abstraction $S_1$, $S_2$ and $S_3$ with the parameters $(\eta,\tau)$, $(2\eta,2\tau)$ and $(4\eta,4\tau)$.
We refer to the steps of \REFfig{fig:informal example} as Pic.~\#.

For the \emph{safety control problem} (the top figure in \REFfig{fig:informal example}), we assume that a set of unsafe states are given (the black box in the left of Pic.~1).
These need to be avoided by the system.
For lazy ABCS, we first fully compute the abstract transition relation of the coarsest abstraction $S_3$, 
and find the states from where the unsafe states can be avoided for at least one time step of length $4\tau$ (blue region in Pic.~2).
Normally for a single-layered algorithm, the complement of the blue states would immediately be discarded as losing states.
However, in the multi-layered approach, we treat these states as \emph{potentially losing} (cyan regions), and proceed to $S_2$ (Pic.~3) 
to determine if some of these potentially losing states can avoid the unsafe states with the help of a more fine-grained controller.

However, we cannot perform any safety analysis on $S_2$ yet as the abstract transitions of $S_2$ have not been computed.
% At this point, the abstract transitions of $S_2$ have not been computed.
Instead of computing all of them, as in a non-lazy approach, we only locally explore the outgoing transitions of the potentially losing states in $S_2$.
Then, we compute the subset of the potentially losing states in $S_2$ that can avoid the unsafe states for at least one time step (of length $2\tau$ in this case).
These states are represented by the blue region in Pic.~3, which get saved from being discarded as losing states in this iteration.
Then we move to $S_1$ with the rest of the potentially losing states and continue similarly.
The remaining potentially losing states at the end of the computation in $S_1$ are surely losing---relative to the finest abstraction $S_1$---and are permanently discarded.
This concludes one ``round'' of exploration.

We restart the process from $S_3$.
This time, the goal is to avoid reaching the unsafe states for at least two time steps of available lengths.
This is effectively done by inflating the unsafe region with the discarded states from previous stages (black regions in Pics.~5,~6, and 7).
The procedure stops when the combined winning regions across all layers do not change for two successive iterations.

In the end, the multi-layered safety controller is obtained as a collection of the safety controllers 
synthesized in different abstraction layers in the last round of fixed-point computations.
The resulting safety controller domain is depicted in Pic.~8.

Now consider the \emph{reach-avoid control problem} in \REFfig{fig:informal example} (bottom). 
The target set is shown in red, and the states to be avoided in black.
We start by computing the abstract transition system completely for the coarsest layer and solve the reachability
fixed point at this layer until convergence using under-approximations of the target and of the safe states.
The winning region is marked in blue (Pic.~2); note that the approximation of the bad states ``cuts off'' the possibility
to reach the winning states from the states on the right.
We store the representation of this winning region in the finest layer as the set $\Aux_1$.

% 
% (for all layer $L$ cells within $\Psi_1$. We then run the usual reachability fixed-point on the abstract game graph w.r.t.\ $\Aux_L$ \emph{until convergence} (denoted by $\REACH_\infty(\Aux_L)$), obtaining the winning region marked by blue (Pic.~2). 
% The latter is projected to layer $1$ and added to $\Aux_1$.

Intuitively, we run the reachability fixed point until convergence to enlarge the winning state set as much as possible using large cells. 
This is in contrast to the previous algorithm for safety control in which we performed just one iteration at each level.
For safety, each iteration of the fixed-point shrinks the winning state set. 
Hence, running $\SAFE$ until convergence would only keep those coarse cells which form an invariant set by themselves. 
Running one iteration of $\SAFE$ at a time instead has the effect that clusters of finer cells which can be controlled to be safe by 
a suitable controller in the corresponding layer
are considered safe in the coarser layers in future iterations.
This allows the use of coarser control actions in larger parts of the state space (see Fig.~1 in~\cite{hsu2018lazy} for an illustrative example of this phenomenon).

To further extend the winning state set $\Aux_1$ for reach-avoid control, we proceed to the next finer layer
$l=2$ with the new target region (red) being the projection of $\Aux_1$ to $l=2$.
As in safety control, all the safe states in the complement of $\Aux_l$ are potentially within the winning state set. 
The abstract transitions at layer $l=2$ have not been computed at this point.
% However, computing abstract transitions for all of them is sub-optimal: for the wide region to the right of the obstacles, 
% we can get away with using layer $3$, which we have already computed. 
% This is in contrast to safety control, where $\Psi_l\setminus\Aux_l$ is a tight over-approximation, also 
% called \emph{frontier}, of all states encountered in one iteration of the fixed-point. 
We only compute the abstract transitions for the \emph{frontier} states: these are all the cells that
might contain layer $2$ cells that can reach the current winning region within $m$ steps (for some parameter $m$ chosen in the implementation).
The frontier is indicated for layer $2$ by the small gridded part in Pic.~3.

We continue the backward reachability algorithm on this partially computed transition system by running the fixed-point for $m$ steps.
The projection of the resulting states to the finest layer is added to $\Aux_1$.  
In our example (Pic.~3), we reach a fixed-point just after $1$ iteration implying that no more layer $2$ (or layer $3$) 
cells can be added to the winning region. 

We now move to layer $1$, compute a new frontier (the gridded part in Pic.~4), and run the reachability fixed point on $\Aux_1$ for $m$ steps.
We add the resulting winning states to $\Aux_1$ (the blue region in Pic.~4). 
At this point, we could keep exploring and synthesizing in layer $1$, 
but in the interest of efficiency we want to give the coarser layers a chance to progress. 
This is the reason to only compute $m$ steps of the reachability fixed point in any one iteration.
Unfortunately, for our example, the attempt to go coarser fails as no new layer $2$ cells can be added yet (see Pic.~3). 
We therefore fall back to layer $1$ and make progress for $m$ more steps (Pic.~5). 
At this point, the attempt to go coarser is successful (Pic.~6) as the right side of the small passage was reached.

We continue this movement across layers until synthesis converges in the finest layer. 
In Pic.~8, the orange, green and yellow colored regions are the controller domains 
obtained using $l=3$, $l=2$ and $l=1$, respectively. 
Observe that we avoid computing transitions for a significant portion of layers $1$ and $2$ (the ungridded space in Pics.~5, 6, respectively).

\section{Control Systems and Multi-Layered ABCS}\label{sec:prelim}

We recall the theory of feedback refinement relations (FRR) \cite{ReissigWeberRungger_2017_FRR} and multi-layered ABCS \cite{hsu2018multi}.

\smallskip
\noindent\textbf{Notation.}
We use the symbols $\N$, $\real{}$, $\real{}_{>0}$, $\mathbb{Z}$, and $\mathbb{Z}_{>0}$ 
to denote the sets of natural numbers, reals, positive reals, integers, and positive integers, respectively. 
Given $a,b\in\real{}$ with $a\leq b$, we write $[a,b]$ for the closed interval $\set{x\in\real{}\mid a\leq x \leq b}$ and
write $[a;b]=[a,b]\cap \Z$ as its discrete counterpart. 
Given a vector $a\in\real{n}$, we denote by $a_{i}$ its $i$-th element, for $i\in [1;n]$.
We write $\hyint{a,b}$ for the closed hyper-interval $\real{n}\cap([a_1,b_1]\times\hdots\times[a_n,b_n])$.
We define the relations $<,\leq,\geq,>$ on vectors in $\real{n}$ component-wise.
For a set $W$, we write $W^*$ and $W^\omega$ for the sets of finite
and infinite sequences over $W$, respectively.
We define  $W^\infty = W^* \cup W^\omega$.
% For $w\in W^*$, we write $|w|$ for the length of $w$; the length of $w\in W^\omega$ is $\infty$. 
We define $\dom{w} = \Set{0,\ldots, |w|-1}$ if $w\in W^*$, and $\dom{w} = \N$ if $w\in W^\omega$. 
% We denote by $\domp{w}=\SETMINUS{\dom{w}}{\Set{0}}$ the positive domain of $w$.
For $k\in \dom{w}$ we write $w(k)$ for the $k$-th symbol of $w$.
% and $w|_{[0,k]}$ for the restriction of $w$ to the domain $[0,k]$. %If $W=A\times B$, the projection of $w\in W^\infty$ on $A$ is denoted by $w|_{A}$. 

\subsection{Abstraction-Based Controller Synthesis}\label{sec:Prelim_FRR}

\smallskip
\noindent\textbf{Systems.} %\label{sec:prelim_abstraction}
A \emph{system} $S=(X,U,F)$ consists of a state space $X$, an input space $U$, and a transition function $F:X\times U\fun 2^X$. 
A system $S$ is \emph{finite} if $X$ and $U$ are finite. 
A trajectory $\xi\in X^\infty$ is a maximal sequence of states compatible with $F$:
for all $1\leq k < |\xi|$ there exists $u\in U$ s.t.\ $\xi(k)\in F(\xi(k-1),u)$ and 
if $|\xi| < \infty$ then $F(\xi(|\xi|),u)= \emptyset$ for all $u\in U$.
For $D\subseteq X$, a $D$-trajectory is a trajectory $\xi$ with $\xi(0)\in D$.
The \emph{behavior} $\Beh{S, D}$ of a system $S=(X,U,F)$ w.r.t.\ $D\subseteq X$ 
consists of all $D$-trajectories; when $D=X$, we simply write $\Beh{S}$.

\smallskip
\noindent\textbf{Controllers and Closed Loop Systems.}
A \emph{controller} $\C=(\Uc,U,\Gc)$ for a system $S=(X,U,F)$
consists of a controller \emph{domain} $\Uc\subseteq X$, a space of inputs $U$, and a control map 
$\Gc:\Uc\fun 2^{U}\setminus\set{\emptyset}$ 
mapping states in its domain to non-empty sets of control inputs.
The \emph{closed loop system} formed by interconnecting
$S$ and $\C$ in \emph{feedback} is defined by the system 
 $S^{cl}=(X,U, F^{cl})$ with $F^{cl}:X\times U \fun 2^{X}$ s.t.\ 
 $x'\in F^{cl}(x,u)$ iff $x\in \Uc$ and $u\in\Gc(x)$ and $x'\in F(x,u)$, or $x\notin \Uc$ and $x'\in F(x,u)$.

% 
% If $\C$ is \emph{static}, the closed loop system simplifies to
% \begin{equation*}%\label{equ:def_Fcl}
%  S^{cl}=(X,U,F^{cl})
%  ~\text{s.t.}~
%  F^{cl}(x,u)=
%  \DiCases{F(x,u)}{u=\Gc(x)}{\emptyset}{\text{otherwise}}.
% \end{equation*}

\smallskip
\noindent\textbf{Control Problem.}
We consider specifications given as $\omega$-regular languages 
whose atomic predicates are interpreted as sets of states. 
Given a specification $\psi$, a system $S$, and an interpretation of the predicates as sets
of states of $S$, we write $\semantics{\psi}_S\subseteq \Beh{S}$ 
for the set of behaviors of $S$ satisfying $\psi$.
The pair $\tuple{S,\psi}$ is called a \emph{control problem} on $S$ for $\psi$. 
A controller $\C = (\Uc, U, G)$ for $S$ solves $\tuple{S,\psi}$ if
$\Beh{S^{cl}, \Uc}\subseteq\semantics{\psi}_S$.
The set of all controllers solving $\tuple{S,\psi}$ is denoted by $\WIN(S, \psi)$.

\smallskip
\noindent\textbf{Feedback Refinement Relations.}
Let $S_i=(X_i,U_i,F_i)$, $i\in\Set{1,2}$ be two systems with
$U_2\subseteq U_1$. 
A \emph{feedback refinement relation} (FRR) from $S_1$ to $S_2$ 
is a relation $Q\subseteq X_1\times X_2$ s.t.\ 
for all $x_1\in X_1$ there is some $x_2\in X_2$ such that $Q(x_1,x_2)$ and
for all $(x_1,x_2)\in Q$, we have
\begin{inparaenum}[(i)]
 \item $U_{S_2}(x_2)\subseteq U_{S_1}(x_1)$, and 
 \item $u\in U_{S_2}(x_2) \Rightarrow Q(F_1(x_1,u))\subseteq F_2(x_2,u)$
\end{inparaenum}
where $U_{S_i}(x):=\SetComp{u\in U_i}{F_i(x,u)\neq \emptyset}$.
% \begin{equation}\label{equ:Us}
%  U_{S_i}(x)=\SetComp{u\in U_i}{F_i(x,u)\neq \emptyset}.
% \end{equation}
% 
We write $S_1\frr{Q} S_2$ if $Q$ is an FRR from $S_1$ to $S_2$.

\smallskip
\noindent\textbf{Abstraction-Based Controller Synthesis (ABCS).}
Consider two systems $S_1$ and $S_2$, with $S_1\frr{Q} S_2$. 
Let $\C=(\Uc,U_2,\Gc)$ be a controller for $S_2$. 
Then, as shown in \cite{ReissigWeberRungger_2017_FRR}, $\C$ 
can be refined into a controller for $S_1$, defined by 
$\C\circ Q=(\widetilde{\Uc},U_1,\widetilde{\Gc})$ 
with $\widetilde{\Uc}= Q^{-1}(\Uc)$, and
% $\widetilde{\Fc}(z,x)=\SetComp{z'}{\ExQ{\xa\in Q(x)}{z'\in\Fc(z,\xa)}}$, and
$\widetilde{\Gc}(x_1)=\SetComp{u\in U_1}{\ExQ{x_2\in Q(x_1)}{u\in\Gc(x_2)}}$ for all $x_1\in\widetilde{\Uc}$. 
This implies soundness of ABCS. %, as shown in \cite{ReissigWeberRungger_2017_FRR}, Thm. VI.3.

\begin{proposition}[\cite{ReissigWeberRungger_2017_FRR}, Def. VI.2, Thm. VI.3]
\label{prop:paimpliespt}
Let $S_1\frr{Q} S_2$ and $\C\in\WIN(S_2,\psi)$ for a specification $\psi$.
If for all $\xi_1\in \Beh{S_1}$ and $\xi_2\in\Beh{S_2}$ with
$\dom{\xi_1}=\dom{\xi_2}$ and $(\xi_1(k), \xi_2(k))\in Q$
for all $k\in \dom{\xi_1}$ holds that  
$\xi_2\in\semantics{\psi}_{S_2}\Rightarrow\xi_1\in\semantics{\psi}_{S_1}$, 
then $\C\circ Q \in \WIN(S_1, \psi)$.
% \begin{equation}
%  \propImp{
%  \begin{propConjA}
%   \xi_2\in\varphi_2[\psi]\\
%   \AllQ{k\in\N}{(\xi_1(k),\xi_2(k))\in Q}
%  \end{propConjA}
% }{\xi_1\in\varphi_1[\psi]}.
% \end{equation}
\end{proposition}

% \AKS{We need to be careful to define $k$ over the domain of $\xi_1$ and require the domains of $\xi_1$ and $\xi_2$ to match, as we also have finite trajectories (see my comment earlier on behaviors)}.

\subsection{ABCS for Continuous Control Systems}\label{sec:prelim_ABCS}

We now recall how ABCS can be applied to continuous-time systems by delineating the abstraction procedure \cite{ReissigWeberRungger_2017_FRR}. 

\smallskip
\noindent\textbf{Continuous-Time Control Systems.}\
A \emph{control system} $\Sigma = (X, U, W, f)$
consists of a state space $X= \real{n}$, a non-empty input space $U\subseteq\real{m}$, 
a compact disturbance set $W\subset \real{n}$ with $0\in W$, and 
a function $f:X\times U\rightarrow X$ s.t. $f(\cdot,u)$ is locally Lipschitz for all $u\in U$. 
Given an initial state $x_0\in X$, a positive parameter $\tau>0$, and a constant input trajectory $\mu_u:[0,\tau]\rightarrow U$
which maps every $t\in [0,\tau]$ to the same $u\in U$, 
a \emph{trajectory} of $\Sigma$ 
on $[0,\tau]$ is an absolutely continuous function $\xi:[0,\tau]\rightarrow X$  s.t. $\xi(0) = x_0$ and
$\xi(\cdot)$ fulfills the following differential inclusion for almost every $t\in[0,\tau]$:
\begin{equation}\label{equ:def_f}
 \dot{\xi}\in f(\xi(t),\mu_u(t))+W = f(\xi(t),u) + W. 
\end{equation} 
We collect all such solutions in the set $\ON{Sol}_f(x_0,\tau,u)$.

\smallskip
\noindent\textbf{Time-Sampled System.}\
Given a time sampling parameter $\tau>0$, we define the \emph{time-sampled system} $\St{}(\Sigma,\tau)=(X,U,\Ft{})$ associated with $\Sigma$, 
where $X$, $U$ are as in $\Sigma$, and the transition function $\Ft{}:X\times U\fun 2^X$ is defined as follows.
For  all $x\in X$ and $u \in U$, we have $x'\in \Ft{}(x,u)$ iff there exists a solution $\xi\in\ON{Sol}_f(x,\tau,u)$ s.t.\ $\xi(\tau)=x'$.

\smallskip
\noindent\textbf{Covers.}\ 
% For vectors $a,b\in(\real{}\cup\Set{\pm\infty})^n$, a \emph{cell} $\hyint{a,b}$ is closed set $\set{x\in\real{n}\mid \forall i\in\set{1,\ldots,n}.a_i\leq x_i \leg b_i}$.
% To define a \emph{finite} transition system as an abstraction of $\St{}$ 
A \emph{cover} $\widehat{X}$ of the state space $X$ is a set of
non-empty, closed hyper-intervals $\hyint{a,b}$ with $a,b\in (\real{}\cup\Set{\pm\infty})^n$ called \emph{cells},
such that every $x\in X$ belongs to some cell in $\widehat{X}$. 
% This definition allows for unbounded cells in $\widehat{X}$. 
% 
% 
Given a grid parameter $\eta \in\real{}_{>0}^n$, we say that a point $c\in Y$ is \emph{$\eta$-grid-aligned} if there is $k\in\Z^n$ s.t.\ for each $i\in \set{1,\ldots,n}$,
$c_i = \alpha_i + k_i\eta_i - \frac{\eta_i}{2}$.
Further, a cell $\hyint{a,b}$ is \emph{$\eta$-grid-aligned} if there is a $\eta$-grid-aligned point $c$ s.t.\ $a = c - \frac{\eta}{2}$ and
$b = c + \frac{\eta}{2}$;
such cells define sets of diameter $\eta$ whose center-points are $\eta$-grid-aligned. 
% Given a compact set $Y=\hyint{\alpha, \beta}\subseteq X$ with $\alpha,\beta\in\real{n}$ s.t.\ $\beta - \alpha$ is an integer multiple of $\eta$

% To construct a \emph{finite} abstractions of $\St{}$ using such \emph{$\eta$-grid-aligned} cells as abstract states, we consider a compact \emph{region of interest} $Y \subseteq X$ where we assume that $Y = \hyint{\alpha, \beta}$ with $\alpha,\beta\in\real{n}$ s.t.\ $\beta - \alpha$ is an integer multiple of $\eta$. Then we define the \emph{cover} $\widehat{X}$ of $Y$ as the \emph{finite} set of non-overlapping \emph{$\eta$-grid-aligned} cells 
% % $\hyint{a,b}$ with $a,b\in (\real{}\cup\Set{\pm\infty})^n$, 
% s.t.\ every $x\in Y$ belongs to exactly one cell $\xa\in\widehat{X}$. 
% Clearly, if $Y = \hyint{\alpha, \beta}$ with $\alpha,\beta\in\real{n}$ s.t.\ $\beta - \alpha$ is an integer multiple of $\eta$ we have that the set of $\eta$-grid-aligned cells is a \emph{finite cover} for $Y$.

% In the following, we assume the region of interest $Y$ is fixed once and for all.
% \RM{OMIT?}\footnote{
% 	$Y$ induces a \emph{finite} \RM{undef:} abstaction by restricting attention to a subset of the state space for control. 
% 	$Y$ can also be interpreted as a global safety requirement (see Sec.~\ref{sec:problem}).}

\smallskip
\noindent\textbf{Abstract Systems.}\ 
An \emph{abstract system} $\Sa{}(\Sigma,\tau,\eta)=(\Xa{},\Ua{},\Fa{})$ for a control system $\Sigma$,
a time sampling parameter $\tau > 0$, and a grid parameter $\eta \in \real{n}_{>0}$
consists of an abstract state space $\Xa{}$, a finite abstract input space $\Ua\subseteq U$, and an abstract transition function 
$\Fa{}:\Xa{}\times \Ua{}\rightarrow 2^{\Xa{}}$. To ensure that $\Sa{}$ is finite, we consider a compact \emph{region of interest} $Y=\hyint{\alpha, \beta}\subseteq X$ with $\alpha,\beta\in\real{n}$ s.t.\ $\beta - \alpha$ is an integer multiple of $\eta$. 
Then we define $\Xa{}=\widehat{Y}\cup\widehat{X}'$ s.t.\ $\widehat{Y}$ is the \emph{finite} set of \emph{$\eta$-grid-aligned} cells covering $Y$ and $\widehat{X}'$ is a finite set of large unbounded cells covering the (unbounded) region $X\setminus Y$. 
We define $\Fa{}$ based on the dynamics of $\Sigma$ only within $Y$. That is, for all $\xa\in\Ya{}$, $\xa'\in\Xa{}$, and $u\in\Ua{}$ 
we require
 \begin{align}\label{eq:next state abs sys 0}
   \xa'\in \Fa{}(\xa,u) \ \mbox{ if } \ 	\exists\xi\in \cup_{x\in\xa}\ON{Sol}_f(x,\tau,u) \;.\; \xi(\tau) \in \xa'.
 \end{align}
For all states in  $\xa\in(\Xa{}\setminus\Ya{})$ we have that $\Fa{}(\xa,u)=\emptyset$ for all $u\in\Ua$.
% $\hyint{a,b}$ with $a,b\in (\real{}\cup\Set{\pm\infty})^n$, 
% An \emph{abstract system} $\Sa{}(\Sigma,\tau,\eta)=(\Xa{},\Ua{},\Fa{})$ for a control system $\Sigma$,
% a time sampling parameter $\tau > 0$, and a grid parameter $\eta \in \real{n}_{>0}$
% consists of:
% % 
% \begin{inparaenum}[(i)]
%  \item an abstract state space $\Xa{}$ which forms a finite cover of $X$;
%  moreover, we assume there is a non-empty subset 
%  $\Ya{}\subseteq\Xa{}$ which forms a finite cover of $Y$ with $\eta$-grid aligned cells,
%  \item a finite abstract input space $\Ua\subseteq U$, and
%  \item an abstract transition function 
% $\Fa{}:\Xa{}\times \Ua{}\rightarrow 2^{\Xa{}}$ such that 
% for all $\xa\in(\Xa{}\setminus\Ya{})$ and $u\in\Ua$ we have $\Fa{}(\xa,u)=\emptyset$ 
% and for all $\xa\in\Ya{}$, $\xa'\in\Xa{}$, and $u\in\Ua{}$ 
% it holds that
%  \begin{align}\label{eq:next state abs sys 0}
%    \xa'\in \Fa{}(\xa,u) \ \mbox{ if } \ 	\exists\xi\in \cup_{x\in\xa}\ON{Sol}_f(x,\tau,u) \;.\; \xi(\tau) \in \xa'.
%  \end{align}
% \end{inparaenum}
We extend $\Fa{}$ to sets of abstract states $\Upsilon\subseteq \Xa{}$ by defining $\Fa{}(\Upsilon,u) := \bigcup_{\xa\in \Upsilon} \Fa{}(\xa,u)$.

While $\Xa{}$ is not a partition of the state space $X$, notice that cells only overlap at the boundary and one can define 
a deterministic function that resolves the resulting non-determinism by consistently mapping such boundary states to a unique cell covering it. 
The composition of $\Xa{}$ with this function defines a partition.
To avoid notational clutter, we shall simply treat $\widehat{X}$ as a partition.

% \smallskip
% \noindent\textbf{Induced FRR.}\
% %\label{sec:prelim:FRR}
% %
% It was shown in \cite{ReissigWeberRungger_2017_FRR}, Thm. III.5 that the relation
% $\Qa{}\subseteq X\times \Xa{}$ defined by $(x,\xa)\in\Qa{}$ iff $x\in\xa$
% is an FRR between $\St{}$ and $\Sa{}$, i.e.,
% $\St{}\frr{\Qa{}}\Sa{}$.
% %
% Hence, we can apply ABCS as described in \REFsec{sec:Prelim_FRR} by computing a 
% controller for $\Sa{}$ which can then be refined to a controller for $\St{}$ under the pre-conditions of Prop.~\ref{prop:paimpliespt}.% holds. 

\smallskip
\noindent\textbf{Control Problem.}\
It was shown in \cite{ReissigWeberRungger_2017_FRR}, Thm. III.5 that the relation
$\Qa{}\subseteq X\times \Xa{}$, defined by all tuples $(x,\xa)\in\Qa{}$ for which $x\in\xa$,
is an FRR between $\St{}$ and $\Sa{}$, i.e.,
$\St{}\frr{\Qa{}}\Sa{}$.
Hence, we can apply ABCS as described in \REFsec{sec:Prelim_FRR} 
by computing a controller $C$ for $\Sa{}$ which can then be refined to a controller for $\St{}$ under the pre-conditions of Prop.~\ref{prop:paimpliespt}.% holds. 

More concretely, we consider safety and reachability control problems for the 
continuous-time system $\Sigma$, which are defined by a set of \emph{static obstacles} $\obstacle \subset X$ which 
should be avoided and a set of \emph{goal states} $\goal\subseteq X$ which should be reached, respectively.
Additionally, when constructing $\Sa{}$, we used a compact region of interest $Y\subseteq X$ to ensure \emph{finiteness} of $\Sa{}$ allowing to apply tools from reactive synthesis \cite{MPS95} to compute $C$. 
This implies that $C$ is only valid within $Y$. 
We therefore interpret $Y$ as a \emph{global safety requirement} and synthesize a controller which keeps the system within 
$Y$ while implementing the specification. 
This interpretation leads to a safety and reach-avoid
control problem, w.r.t.\ a safe set $\Rset=Y\setminus \obstacle$ and target set $\Tset=\goal \cap\Rset$.
% formally defined in linear temporal logic (LTL) \cite{baier2008principles} by
% \begin{subequations}\label{equ:LTLspec}
% \begin{align}
% 	\Specs &= \text{always}\; (\Rset{}),\quad\text{and}\label{equ:LTLspec:safe}\\
% 	\Specr &= \left[\left(\text{always}\; (\Rset )\right)\; \text{until} \; \Tset{}\right].\label{equ:LTLspec:reach} %\wedge \left[\text{eventually}\; T\right]
% \end{align}
% \end{subequations}
As $\Rset{}$ and $\Tset{}$ can be interpreted as predicates over the state space $X$ of $\St{}$, this directly defines the control problems $\tuple{\St{},\Specs}$ and $\tuple{\St{},\Specr}$ via
% . As $C\in\WIN(\St{},\Spec)$ iff $\Beh{\St{}^{cl}}\subseteq\semantics{\Spec}_{\St{}}$ the latter holds for $\Spec$ in \eqref{equ:LTLspec} iff
\begin{subequations}\label{equ:Csound}
 \begin{align}
 \semantics{\Specs}_{\St{}}&:=\SetCompX{\xi\in\Beh{\St{}}}{\AllQ{k\in\dom{\xi}}{\xi(k)\in\Rset{}}},~\text{and}\label{equ:Csound:safe}\\
 \semantics{\Specr}_{\St{}}&:=\SetCompX{\xi\in\Beh{\St{}}}{\ExQ*{k\in\dom{\xi}}{
 \begin{propConjA}
  \xi(k)\in\Tset{}\\
  \AllQ{k' \leq k}{\xi(k')\in \Rset{}}
 \end{propConjA}
 }}\label{equ:Csound:reach}
\end{align}
\end{subequations}
for safety and reach-avoid control, respectively.
Intuitively, a controller $C\in\WIN(\St{},\Spec)$ applied to $\Sigma$ is a sample-and-hold controller, which ensures that the specification holds on all closed-loop trajectories \emph{at sampling instances}.%
\footnote{This implicitly assumes that sampling times and grid sizes are such that no \enquote{holes} occur between consecutive cells visited by a trajectory. This can be formalized by assumptions on the growth rate of $f$ in \eqref{equ:def_f} which is beyond the scope of this paper.}

To compute $C\in\WIN(\St{},\Spec)$ via ABCS as described in \REFsec{sec:Prelim_FRR} we need to ensure that the pre-conditions of Prop.~\ref{prop:paimpliespt} hold. This is achieved by \emph{under-approximating} the safe and target sets by abstract state sets 
\begin{align}\label{equ:RsetaTseta}
 \Rseta{}=\SetComp{\xa\in\Xa{}}{\xa\subseteq \Rset{}},~\text{and}~
 \Tseta{}=\SetComp{\xa\in\Xa{}}{\xa\subseteq \Tset{}},
\end{align}
and defining $\semantics{\Specs}_{\Sa{}}$ and $\semantics{\Specr}_{\Sa{}}$ via \eqref{equ:Csound} by substituting $\St{}$ with $\Sa{}$, $R$ with $\Rseta{}$ and $T$ with $\Tseta{}$. With this, it immediately follows from Prop.~\ref{prop:paimpliespt} that $C\in\WIN\tuple{\Sa{},\Spec}$ can be refined to the controller $C\circ Q\in\WIN{\tuple{\St{},\Spec}}$.
% With this, the abstract safety control problem $\tuple(\Sa{},\Specs)$ and the abstract reach-avoid control problem $\tuple(\Sa{},\Specr)$ can be 

% Intuitively, the specification asks the system to stay within $R$, and, in the case of reach-avoid control, to additionally enforce a visit to the target $T$. We refer to these control problems for a given dynamical system $\Sigma$ jointly as $\tuple{\Sigma,\Spec}$.
% 
% The goal of this paper is to compute a controller which solves a control problem $\tuple{\Sigma,\Spec}$, i.e., which ensures that, when connected to $\Sigma$ in feedback, all trajectories of the resulting closed loop fulfill the respective LTL property in \eqref{equ:LTLspec}.

\subsection{Multi-Layered ABCS}\label{sec:prelim_Multi}

We now recall how ABCS can be performed over multiple abstraction layers \cite{hsu2018multi}.
The goal of multi-layered ABCS is to construct an abstract controller $C$ which uses coarse grid cells in as much part of the state space as possible, and only resorts to finer grid cells where the control action needs to be precise. In particular, the domain of this abstract controller $C$ must not be smaller then the domain of any controller $C'$ constructed for the finest layer, i.e., $C$ must be relatively complete w.r.t.\ the finest layer. 
In addition, $C$ should be refinable into a controller implementing $\psi$ on $\Sigma$, as in classical ABCS (see Prop.~\ref{prop:paimpliespt}).

The computation of such a multi-resolution controller via established abstraction-refinement techniques (as in, e.g., \cite{AlfaroRoy_2010}), requires 
a common transition system  connecting states of different coarseness but with the same time step. 
To construct the latter, a FRR between any two abstraction layers must exist, which is not the case in our setting. 
We can therefore not compute a single multi-resolution controller $C$. 
We therefore synthesize a set $\Cset$ of single-layered controllers instead, each for a different coarseness and with a different domain, 
and refine each of those controllers separately, using the associated FRR. 
The resulting refined controller is a sample-and-hold controller which selects the current input value $u\in\Ua\subseteq U$ and 
the duration $\tau_l$ for which this input should be applied to $\Sigma$. This construction is formalized in the remainder of this section.

\smallskip
\noindent\textbf{Multi-Layered Systems.}\
Given a grid parameter $\etamin$, a time sampling parameter $\taumin$, 
and $L \in \mathbb{Z}_{>0}$, 
define
% \footnote{Our method is applicable to grid parameters defined by $\n{1}=\etamin$ and  $\n{l+1}=\gamma\n{l}$ for $l\in[1;L-1]$ and $\gamma\in\SetComp{2^k}{k\in\mathbb{N}}$, and sampling times $\ta{l} = \alpha(l)\taumin$ where $\alpha:[1;L]\rightarrow\real{}$ is a monotonically increasing function. For notational simplicity, we restrict our attention to $\gamma=2$ and $\alpha(l)=2^{l-1}$.}
%
$\n{l} = 2^{l-1}\etamin$ and $\ta{l} = 2^{l-1}\taumin$.
For a control system $\Sigma$ and a subset $Y\subseteq X$ with $Y=\hyint{\alpha,\beta}$, 
s.t.\ $\beta-\alpha=k\n{L}$ for some $k\in \Z^n$, $\widehat{Y}_l$ is the $\n{l}$-grid-aligned cover of $Y$.
% For each $\layer$, we define the $\n{l}$-grid-aligned cover
% $\widehat{Y}_l$ as in \REFsec{sec:prelim_ABCS}.$.
% 
% The parameters $\ta{l}$ and $\n{l}$ and the sets $\Xa{l}'$ induce 
This induces
a sequence of time-sampled systems 
\begin{align}\label{equ:StsetSaset}
 \Stset:=\set{\St{l}(\Sigma,\ta{l})}_\layer\quad \text{and}\quad \Saset:=\set{\Sa{l}(\Sigma,\ta{l},\n{l})}_\layer,
\end{align}
respectively, where $\St{l}:=(X,U,\Ft{l})$ and $\Sa{l}:=(\Xa{l},\Ua{},\Fa{l})$. If $\Sigma$, $\tau$, and $\eta$ are
clear from the context, we omit them in $\St{l}$ and $\Sa{l}$.
% $\set{\St{l}(\Sigma, \ta{l})}_\layer$ and a sequence of symbolic abstract systems $\set{\Sa{l}(\Sigma, \ta{l},\n{l},\beta_{\ta{l}})}_\layer$.
% Each $\St{l}=(X,U,\Ft{l})$  and each $\Sa{l}=(\Xa{l},\widehat{U},\Fa{l})$ is associated with $\ta{l}$ and $(\ta{l},\n{l})$, respectively. 
% We assume that $\widehat{X}'_\nindex{l}\subseteq\widehat{X}_\nindex{l}$ satisfies the \RM{which ``above''?} 
% above properties for all \layer. 
% We denote by $\Xaall'$ the collection of all abstract bounded cells, i.e., $\Xaall'=\bigcup_{\layer} \Xa{l}'$. 
% For simplicity, we assume that all layers use the same continuous and abstract input spaces $U$ and $\Ua{}\subseteq U$, respectively. 
% If $\Sigma$, $\ta{l}$, $\n{l}$, and $\beta_{\ta{l}}$ are clear from the context, we use
% $\St{l}$ and $\Sa{l}$ as short forms of $\St{l}(\Sigma, \ta{l})$ and $\Sa{l}(\Sigma, \ta{l},\n{l},\beta_{\ta{l}})$, respectively.
 
Our multi-layered synthesis algorithm relies on the fact that the sequence $\Saset{}$ of abstract transition systems is monotone, formalized by the following assumption.%
\begin{assumption}\label{assumption:monotonic approximation}
Let $\Sa{l}$ and $\Sa{m}$ be two abstract systems with $m,\layer$, $l<m$. Then\footnote{We write $\Upsilon_{l}\subseteq \Upsilon_{m}$ with $\Upsilon_{l}\subseteq \Xa{l}$, $\Upsilon_{m}\subseteq \Xa{m}$ as short for $\bigcup_{\xa\in \Upsilon_l}\xa\subseteq \bigcup_{\xa\in \Upsilon_m}\xa$.} $\Fa{l}(\Upsilon_l) \subseteq \Fa{m}(\Upsilon_m)$ if $\Upsilon_{l}\subseteq \Upsilon_{m}$.
% 	and $\Sa{2}(\Sigma,\tau,\eta_2)=(\Xa{2},\Ua{},\Fa{2})$ be two abstract systems with $\eta_1 < \eta_2$.
% 	Let $\Upsilon_1 \subset \Xa{1}$ and $\Upsilon_2\subset \Xa{2}$ be sets of states s.t.\ $\Upsilon_{l}\subseteq \Upsilon_{m}$.
% 	Then, we assume that $\Fa{1}(\Upsilon_1) \subseteq \Fa{2}(\Upsilon_2)$.
\end{assumption} 
As the exact computation of $\cup_{x\in\xa}\ON{Sol}_f(x,\tau,u)$ in \eqref{eq:next state abs sys 0} is expensive (if not impossible), a numerical over-approximation is usually computed. \REFass{assumption:monotonic approximation} states that the approximation must be monotone in the granularity of the discretization. This is fulfilled by many numerical methods e.g.\ the ones based on decomposition functions for mixed-monotone systems \cite{coogan2015efficient} or on growth bounds \cite{Scots}; our implementation uses the latter one.

% Given the sequence $\set{\Sa{l}}_\layer$ with $\Sa{l}=(\Xa{l},\Ua,\Fa{l})$, we can use $\Ra{ll'}$ to define the (simple) \emph{abstract multi-layered system} 
% $\Saset=(\Xaall,\Ua,\Faall)$, where  $\Xaall=\bigcup_{\layer} \Xa{l}$ 
% and 
% \begin{equation}\label{equ:Faall}
%  \textstyle\Faall(\xa,\ua)=\bigcup_{\layer}\Ra{ll'}(\Fa{l'}(\xa,\ua))
% \end{equation}
% for all $\xa\in \Xa{l'}$, $\llayer$, and $\ua\in\Ua$.
% Intuitively, $\Saset$ is a non-deterministic system; for every state $\xa\in\Xa{l'}$ and input $\ua\in\Ua$ transitions to states of all layers are possible. 
% That is, $\xa'\in\Faall(\xa,\ua)$ if there exists an $\xa''\in\Xa{l'}$ s.t. $\xa''\in\Fa{l'}(\xa,\ua)$ and $\xa'\in\Ra{ll'}(\xa'')$ for some $\layer$.
% 
% 
% Similarly, given the sequence $\set{\St{l}}_\layer$ with $\St{l}=(X_l,U,\Ft{l})$, we define a (simple) \emph{time-sampled multi-layered system} $\Stset=(X,U,\Ftall)$ s.t.\ 
% \begin{equation}\label{equ:Ftall}
%  \textstyle\Ftall(x,u)=\bigcup_{\layer}\Ft{l}(x,u)
% \end{equation}
% for all $x\in X$ and $u\in U$.
% Again, $\Stset$ is a non-deterministic system; in every state $x\in X$ transitions of any duration $\t{l}$, $\layer$ can be chosen, which correspond to some $\Ft{l}(x,u)$.

\smallskip
\noindent\textbf{Induced Relations.}\
It trivially follows from our construction that, for all $\layer$, 
we have $\St{l}\frr{\Qa{l}}\Sa{l}$, where $\Qa{l} \subseteq X\times\Xa{l}$ is the FRR induced by $\Xa{l}$.
The set of relations $\{\Qa{l}\}_\layer$ induces transformers $\Ra{ll'}\subseteq\Xa{l}\times\Xa{l'}$ for $l,\llayer$ between abstract states of different layers such that
\begin{align}\label{equ:Ra}
 \propAequ{\xa\in\Ra{ll'}(\xa')}{\xa\in \Qa{l}(\Qai{l'}(\xa')).}
\end{align}
However, the relation $\Ra{ll'}$ is generally \emph{not} a FRR between the layers due to different time sampling parameters used in different layers (see \cite{hsu2018multi}, Rem. 1). 
This means that $\Sa{l+1}$ cannot be directly constructed from $\Sa{l}$, unlike in usual abstraction refinement algorithms \cite{ClarkeGJLV03,henzinger2002lazy,AlfaroRoy_2010}.

\smallskip
\noindent\textbf{Multi-Layered Controllers.}\
Given a multi-layered abstract system $\Saset$ and some $P\in \mathbb{N}$, a multi-layered controller is a set  $\Cset=\set{\Ci{p}}_{\controllers}$ with $\Ci{p}=(\Uci{p}, \Ua,\Gci{p})$ being a single-layer controller with $\Gci{p}:\Uci{p}\fun 2^{\Ua}$. Then $\Cset$
is a controller for $\Saset$ if for all $\controllers$ there exists a unique $\li\in[1;L]$ s.t.\ $\Ci{p}$ is a controller for $\Sa{\li}$, i.e., $\Uci{p}\subseteq \Xa{\li}$. 
% We denote by $\dom{\Cset}=\bigcup_\controllers\Uci{p}$ the domain of $\Cset$.
% We denote by $\dom{\Ci{p}}=\Uci{p}$ the domain of $\Ci{p}$.
%  
The number $P$ may not be related to $L$; we allow multiple controllers for the same layer and no controller for some layers.

The \emph{quantizer induced by $\Cset$} is a
map $\Qset:X\fun 2^{\Xaall}$ with $\Xaall=\bigcup_\layer \Xa{l}$ s.t.\ for all $x\in X$ it holds that $\xa\in\Qset(x)$ iff %either
% \begin{compactenum}[(i)]
%  \item 
 there exists $p\in[1;P]$ s.t.\  $\xa\in\Qa{l_p}(x)\cap\Uci{p}$
 and no $p'\in[1;P]$ s.t.\ $l_{p'}>l_p$ and $\Qa{l_{p'}}(x)\cap\Uci{p'}\neq\emptyset$.
% or 
%  \item $\xa\in\Qa{1}(x)$ and $\xa\notin\Qa{l_{p}}(x)\cap\Uci{p}$ for all $p\in[1;P]$.
% \end{compactenum}
% 
% \KM{Why should the output of $\Qset$ be ever be a set if we follow this definition? I think $\Qset:X\rightarrow \Xa{}$, because for any given $x\in X$, $\Qset$ maps $x$ to an unique abstract state. No?}
% We identify $\Qset$ with the map $\Qset_z:X\fun 2^{\Xaall}$ s.t.\ $\xa\in \Qset_z(x)$ iff $(\xa,z)\in\Qset (x,z)$. 
% 
In words, $\Qset$ maps states $x\in X$ to the \emph{coarsest} abstract state $\xa$ that is both related to $x$ and is in the domain $\Uci{p}$ of some $\Ci{p}\in\Cset$.
We define $\Ucset=\SetComp{\xa\in\Xaall}{\ExQ{x\in X}{\xa\in\Qset(x)}}$ as the effective domain of $\Cset$ and $\Uc=\SetComp{x\in X}{\Qset(x)\neq\emptyset}$ as its projection to $X$.
% (condition (i)). If such an abstract state does not exist, $\Qset$ maps $x$ to its related layer $l=1$ states (condition (ii)).
% $\Qa{l}$ is non-deterministic for states which lie at the boundary of two cells $\xa,\xa'\in\Xaall{l}$. If such an $x$ happens to be located at the boundary of controller domains computed for different layers, $\Qset_z$ maps $x$ to two abstract cells within different layers.

\smallskip
\noindent\textbf{Multi-Layered Closed Loops.}\
The abstract multi-layered \emph{closed loop system} formed by interconnecting
$\Saset$ and $\Cset$ in \emph{feedback} 
is defined by the system 
$\Saset^{cl}=(\Xaall, \Ua{}, \Faall^{cl})$ with $\Faall^{cl}:\Xaall\times\Ua\fun2^{\Xaall}$ s.t.\ 
$\xa'\in\Faall^{cl}(\xa,\ua)$ iff
(i) there exists $p\in[1;P]$ s.t.\ $\xa\in\Ucset\cap\Uci{p}$, $\ua\in\Gci{p}(\xa)$ and there exists $\xa''\in\Fa{\li}(\xa,\ua)$ s.t.\ either $\xa'\in\Qset(\Qai{\li}(\xa''))$, or $\xa'=\xa''$ and $\Qai{\li}(\xa'')\not\subseteq \Uc$, or 
(ii) $\xa\in\Xa{l}$, $\Qai{l}(\xa)\not\subseteq \Uc$ and $\xa'\in\Fa{l}(\xa,\ua)$.
%  \begin{compactitem}
%  \item  $\xa'\in\Qset(\Qai{\li}(\xa''))$, or
%  \item $\xa'=\xa''$ and $\Qai{\li}(\xa'')\not\subseteq B$
% \begin{compactitem}
%  \item there exists $p\in[1;P]$ s.t.\ $\xa\in\Ucset\cap\Uci{p}$, $\ua\in\Gci{p}(\xa)$ and there exists $\xa''\in\Fa{\li}(\xa,\ua)$ s.t.\
%  \begin{compactitem}
%  \item  $\xa'\in\Qset(\Qai{\li}(\xa''))$, or
%  \item $\xa'=\xa''$ and $\Qai{\li}(\xa'')\not\subseteq B$, or
%  \end{compactitem}
%  \item $\xa\in\Xa{l}$, $\Qai{l}(\xa)\not\subseteq B$ and $\xa'\in\Fa{l}(\xa,\ua)$.
% %  $\xa''\in\Fa{\li}(\xa,\ua)$ s.t.\ $\xa'\in\Qset(\Qai{\li}(\xa''))$.
% \end{compactitem}
% \AKS{I'm NOT sure about this :(}
% 
%\new{
%As $\St{l}\frr{\Qa{l}} \Sa{l}$ for all $l\in[1;L]$, we can obtain a controller for $\Stset$ by first quantizing the state $x$ of $\Stset$ into a state $\xa$ of $\Saset$ using $\Qset$, and then applying $\Cset$ on $\xa$ to find the respective control input.
%This process is called controller \emph{refinement} (see \cite{hsu2018multi}, Sec.~3.4).
% 
This results in the time-sampled closed loop system
$\Stset^{cl}=(X,U, \Ftall^{cl})$ with $\Ftall^{cl}:X\times U\fun 2^X$ s.t.\ 
$x'\in\Ftall^{cl}(x,u)$ iff (i) $x\in \Uc$ and there exists $p\in[1;P]$ and $\xa\in \Qset(x)\cap\Uci{p}$ s.t.\ $u\in\Gci{p}(\xa)$ and $x'\in\Ft{\li}(x,u)$, or (ii) $x\notin \Uc$ and $x'\in\Ft{l}(x,u)$ for some $\layer$.
% 
% Similar to $\Behtcl{}$ and $\Behacl{}$, we define $\Behtclset$ and $\Behaclset$ by replacing $\Ft{}^{cl}$ and $\Fa{}^{cl}$ with $\Ftall^{cl}$ and $\Faall^{cl}$ respectively.
% $\Behtclset$ contains trajectories with non-uniform sampling time.

\smallskip
\noindent\textbf{Multi-Layered Behaviors.}\
Slightly abusing notation, we define the behaviors $\Beh{\Stset}$ and $\Beh{\Saset}$ via the construction for systems $S$ in \REFsec{sec:Prelim_FRR} by interpreting the sequences $\Stset$ and $\Saset$ as systems 
$\Saset=(\Xaall,\Ua,\Faall)$ and
 $\Stset=(X,U,\Ftall)$, s.t.\,
\begin{align}
 \textstyle\Faall(\xa,\ua)=\bigcup_{\layer}\Ra{ll'}(\Fa{l'}(\xa,\ua)),~\text{and}\quad\textstyle\Ftall(x,u)=\bigcup_{\layer}\Ft{l}(x,u),
\end{align}
where $\xa$ is in $\Xa{l'}$.
Intuitively, the resulting behavior $\Beh{\Saset}$ contains trajectories with non-uniform state size; in every time step the system can switch to a different layer using the available transition functions $\Fa{l}$. For $\Beh{\Stset}$ this results in trajectories with non-uniform sampling time; in every time step a transition of any duration $\ta{l}$ can be chosen, which corresponds to some $\Ft{l}$.
For the closed loops $\Stset^{cl}$ and $\Saset^{cl}$ those behaviors are restricted to follow the switching pattern induced by $\Cset$, i.e., always apply the input chosen by the coarsest available controller. The resulting behaviors $\Behtclset$ and $\Behaclset$ are formally defined as in \REFsec{sec:Prelim_FRR} via $\Stset^{cl}$ and $\Saset^{cl}$.

\smallskip
\noindent\textbf{Soundness of Multi-Layered ABCS.}
As shown in \cite{hsu2018multi}, the soundness property of ABCS stated in Prop.~\ref{prop:paimpliespt} transfers to the multi-layered setting.
% the refined controller $\Cset\circ\Qset$ actually solves the refined control problem in the sense of Prop.~\ref{prop:paimpliespt}. 
% This is summarized by the following corollary. % whose trivial proof we omit.% at it follows the same lines as the proof of the single layer claim in \cite{} given the result of \REFthm{thm:frr_ML_cl}.

\begin{proposition}[\cite{hsu2018multi}, Cor.~1]\label{prop:multilayer_soundness}
Let $\Cset$ be a multi-layered controller for the abstract multi-layered system $\Saset$ with effective domains $\Ucset\in\Xaall$ and $\Uc\in X$ inducing the closed loop systems
$\Stclall$ and $\Saclall$, respectively.
Further, let $\Cset\in\WIN(\Saset, \psi)$ for a specification $\psi$ with associated 
behavior $\semantics{\psi}_{\Saset}\subseteq\Beh{\Saset}$ and $\semantics{\psi}_{\Stset}\subseteq\Beh{\Stset}$.
Suppose that for all $\xi\in \Beh{\Stset}$ and $\xia\in\Beh{\Saset}$ s.t.\
(i) $\dom{\xi}=\dom{\xia}$, (ii) for all $k\in \dom{\xi}$ it holds that $(\xi(k),\xia(k))\in \Qset$, and 
(iii) $\xia\in\semantics{\psi}_{\Saset}\Rightarrow\xi\in\semantics{\psi}_{\Stset}$.
Then $\Beh{\Stclall,\Ucset}\subseteq\semantics{\psi}_{\Stset}$, i.e., the time-sampled multi-layered 
closed loop $\Stclall$ fulfills specification $\psi$.
\end{proposition}

% \begin{proof}
%  As $\Cset\in\WIN(\Saset, \psi)$ we know that $\Beh{\Saclall}|_{\Xaall}\subseteq\pa{}[\psi]$.
% \end{proof}
% Cor.~\ref{cor:multilayer_soundness} can be interpreted as follows. Consider the control system $\Sigma$ at state $x_0$. This state is mapped by $\Qset_{z_0}$ to $\xa_0\in\Uci{p}$. Choosing any $u_0\in\Gci{p}(x_0,z_0)$ and applying this input for time $\ta{l_p}$ to $\Sigma$ results in a continuous trajectory $\xi$ with $x_0=\xi(0)$ and $x_1=\xi(\ta{l_p})\in\Ft{l_p}(x_0,u)$. Reapplying this procedure leads to an infinite trajectory $\xi$, with sampled version $\vec{\xi}=x_0x_1\hdots\in\Beh{\Stset}$ and abstract version $\widehat{\xi}=\xa_0\xa_1\hdots\in\Beh{\Saset}$. As $\Cset\in\WIN(\Saset, \psi)$ condition (iii) in Cor.~\ref{cor:multilayer_soundness} ensures that $\vec{\xi}\in\semantics{\psi}_{\Stset}$. 
% For reachability, this implies that $\vec{\xi}$ (and $\xi$) eventually reaches the target. 
% For safety this implies that $\vec{\xi}$ (i.e.\ sampling instances of $\xi$) lies inside the safe set.

\smallskip
\noindent\textbf{Control Problem.}\
Consider the safety and reach-avoid control problems defined over $\Sigma$ in \REFsec{sec:prelim_ABCS}. 
As $\Rset{}$ and $\Tset{}$ can be interpreted as predicates over the state space $X$ of $\Stset{}$, this directly defines the control problems $\tuple{\Stset{},\Specs}$ and $\tuple{\Stset{},\Specr}$ via \eqref{equ:Csound} by substituting $\St{}$ with $\Stset{}$. %When applying a controller to $\Sigma$ we obtain a sample-and-hold controller with non-uniform sampling times.

To solve $\tuple{\Stset{},\Spec}$ via multi-layered ABCS we need to ensure that the pre-conditions of Prop.~\ref{prop:multilayer_soundness} hold. This is achieved by \emph{under-approximating} the safe and target sets by a set $\set{\Rseta{l}}_\layer$ and $\set{\Tseta{l}}_\layer$ defined via \eqref{equ:RsetaTseta} for every $\layer$.
% Then the control problems $\tuple{\Saset{},\Specs}$ and $\tuple{\Saset{},\Specr}$ are defined via
% \begin{subequations}\label{equ:Csetsound}
%  \begin{align}
%  \semantics{\Specs}_{\Saset{}}&:=\AllQ{\xia\in\Beh{\Saset{}},k\in\dom{\xia}}{\xia(k)\in \widehat{R}_{\lambda(\xia(k))}},~\text{and}\label{equ:Csetsound:safe}\\
%  \semantics{\Specr}_{\Saset{}}&:=\AllQ{\xia\in\Beh{\Saset{}}}{\ExQ*{k\in\dom{\xia}}{
%  \begin{propConjA}
%   \xia(k)\in\widehat{T}_{\lambda(\xia(k))}\\
%   \AllQ{k' \leq k}{\xia(k')\in \widehat{R}_{\lambda(\xia(k))}}
%  \end{propConjA}
%  }},\label{equ:Csetsound:reach}
% \end{align}
% \end{subequations}
Then $\semantics{\Specs}_{\Saset{}}$ and $\semantics{\Specr}_{\Saset{}}$ can be defined via \eqref{equ:Csound} by substituting $\St{}$ with $\Saset{}$, $R$ with $\widehat{R}_{\lambda(\xi(k))}$ and $T$ with $\widehat{T}_{\lambda(\xi(k))}$, where $\lambda(\xa)$ returns the $\layer$ to which $\xa$ belongs, i.e., for $\xa\in\Xa{l}$ we have $\lambda(\xa)=l$. 
We collect all multi-layered controllers $\Cset$ for which $\Beh{\Saclall,\Ucset}\subseteq\semantics{\psi}_{\Saset}$ in $\WIN\tuple{\Saset{},\Spec}$. 
With this, it immediately follows from Prop.~\ref{prop:multilayer_soundness} that $\Cset\in\WIN\tuple{\Saset{},\Spec}$ also solves $\tuple{\Stset{},\Spec}$ via the construction of the time-sampled closed loop system $\Stset^{cl}$.

A multi-layered controller $\Cset\in\WIN(\Saset{},\Spec)$ is typically not unique; there can be many different control strategies implementing the same specification. However, the largest possible controller domain for a particular abstraction layer $l$ always exists and is unique. %, because the winning regions (any set of states which is inside the domain of some controller in $\WIN(\Sigma,\Spec)$) are closed under set union by definition. 
In this paper we will compute a \emph{sound} controller $\Cset\in\WIN(\Saset{},\Spec)$ with a maximal domain w.r.t.\ the lowest layer $l=1$. Formally, for any sound layer $1$ controller $\widetilde{C}=\tuple{\widetilde{\Uc},U,\widetilde{G}}\in\WIN(\Sa{1},\Spec)$ it must hold that $\widetilde{\Uc}$ is contained in the projection 
$\Uc_1=\Qa{1}(\Uc)$ of the effective domain of $\Cset$ to layer $1$. We call such controllers $\Cset$ \emph{complete w.r.t.\ layer $1$}.
On top of that, for faster computation we ensure that cells within its controller domain are only refined if needed.

\section{Controller Synthesis}\label{sec:synthesis}

Our synthesis of an abstract multi-layered controller $\Cset\in\WIN(\Saset{},\Spec)$ has three main ingredients. 
First, we use the usual fixed-point algorithms from reactive synthesis \cite{MPS95} 
to compute the maximal set of winning states 
(i.e., states which can be controlled to fulfill the specification) and deduce an abstract controller
(\REFsec{sec:synthesis:FP}).
Second, we allow switching between abstraction layers during these fixed-point computations by saving and re-loading intermediate results of fixed-point 
computations from and to the lowest layer (\REFsec{sec:synthesis:ML}). 
Third, through the use of \emph{frontiers}, we compute abstractions lazily by only computing abstract transitions in parts of the state space 
currently explored by the fixed-point algorithm (\REFsec{sec:synthesis:Lazy}). 
We prove that frontiers always over-approximate the set of states possibly added to the winning region in the corresponding synthesis step.
% As already discussed in the motivating example of \REFsec{sec:Introduction}, the soundness and termination of our requires some additional care. 

\subsection{Fixed-Point Algorithms for Single-Layered ABCS}\label{sec:synthesis:FP}

We first recall the standard algorithms to construct a controller $C$ solving the safety and reach-avoid control problems $\tuple{\Sa{},\Specs}$ and $\tuple{\Sa{},\Specr}$ over the finite abstract system $\Sa{}(\Sigma, \tau,\eta) = (\Xa{},\Ua{},\Fa{})$. 
% Let $\Sigma$ be a control system with global region of interest $Y$
% and let $\Sa{}(\Sigma, \tau,\eta) = (\Xa{},\Ua{},\Fa{})$ 
% be an abstract system.
% Let $R$ and $T$ be subsets of the region of interest $Y$. 
% We define the \emph{under-approximations} $\Rseta{},\Tseta{}\subseteq \Ya{}$ as 
% $\Rseta{} = \SetComp{\xa\in\Xa{}}{\xa\subseteq \Rset{}}$ and 
% $\Tseta{} = \SetComp{\xa\in\Xa{}}{\xa\subseteq \Tset{}}$. 
% 
The key to this synthesis is the 
\emph{controllable predecessor} operator, 
$\FCpre{}:\twoup{\Xa{}}\fun\twoup{\Xa{}}$, 
defined for a set $\Aux\subseteq\Xa{}$ by
\begin{equation}
\FCpre{}(\Aux) := \set{\xa\in\Xa{} \mid \exists \ua\in\Ua{}\;.\;\Fa{}(\xa,\ua) \subseteq \Aux}. \label{eq:define cpre}
\end{equation} 
$\tuple{\Sa{},\Specs}$ and $\tuple{\Sa{},\Specr}$ are solved by iterating this operator.

% We first recall how a single-layer abstract controller $\Cset=\{C\}\in\WIN(\Sa{l},\Spec)$ is computed by using a single abstraction $\Sa{l}$ and 
% \emph{under-approximations} of $R$ and $T$ by $\Rseta{l},\Tseta{l}\subseteq\Ya{l}$ for every \layer s.t.\
% \begin{align}
%  \Rseta{l}=\SetComp{\xa\in\Xa{l}}{\xa\subseteq \Rset{}},~\text{and}~
%  \Tseta{l}=\SetComp{\xa\in\Xa{l}}{\xa\subseteq \Tset{}}.
% \end{align} 

\smallskip
\noindent\textbf{Safety Control.}
Given a safety control problem $\tuple{\Sa{},\Specs}$ associated with $\Rseta{}\subseteq\Ya{}$, one iteratively computes the sets
\begin{equation}\label{equ:safe-fp}
	W^0 = \Rseta{} \text{ and } W^{i+1} = \FCpre{}(W^i) \cap \Rseta{}
\end{equation}
until an iteration $N\in \mathbb{N}$ with $W^N = W^{N+1}$ is reached.
% 
% \begin{algorithm}[]
% 	\caption{Procedure $\textproc{CSafe}$}\label{alg:CSafem}
% 	\begin{algorithmic}[1]
% 	 \Require $\Aux\subseteq \Xa{l}$, $l$
% 	 \State $i\gets 1$
%          \While{$i\leq m$}
% 		\State $\Aux' \gets \FCpre{l}(\Aux) \cap \Aux$\label{alg:CSafem:Iteration}
% 		\If{$\Aux=\Aux'$} 
% 		\State \Return $\Aux$\label{alg:CSafem:FP}
% 		  \Else
% 			  \State $\Aux\gets \Aux'$
% 			  \State $i\gets i+1$
% 		\EndIf
% 	\EndWhile
% 	\State \Return $\Aux$
% 	\end{algorithmic}
% \end{algorithm}
% % 
From this algorithm, we can extract a safety controller $C = (\Uc,\Ua{},G)$ where $\Uc =W^N$ and
\begin{equation}\label{equ:safe-controller}
% 	G(\xa) = \set{\ua\in \Ua{} \mid \Fa{}(\xa,\ua) \subseteq \Uc}
\propImp{\ua\in\Gci{}(\xa)}{\Fa{}(\xa,\ua)\subseteq \Uc}
\end{equation}
for all $\xa\in \Uc$.
Note that $C\in\WIN(\Sa{},\Specs)$. 

We denote the procedure implementing this iterative computation until convergence $\SAFE_\infty(\Rseta{},\Sa{})$.
We  also use a version of $\SAFE$ which runs one step of \eqref{equ:safe-fp} only.
Formally, the algorithm $\SAFE(\Rseta{},\Sa{})$ returns the set $W^1$ (the result of the first iteration of \eqref{equ:safe-fp}).
One can obtain $\SAFE_\infty(\Rseta{},\Sa{})$ by chaining $\SAFE$ until convergence, 
i.e., given $W^1$ computed by $\SAFE(\Rseta{},\Sa{})$, one obtains $W^{2}$ from $\SAFE(W^1,\Sa{})$, and so on. 
In \REFsec{sec:synthesis:ML}, we will use such chaining to switch layers after every iteration within our multi-resolution safety fixed-point.

\smallskip
\noindent\textbf{Reach-Avoid Control.}
Given a reach-avoid control problem $\tuple{\Sa{},\Specr}$ for $\Rseta{},\Tseta{}\subseteq\Ya{}$, 
one iteratively computes the sets
\begin{equation}\label{equ:reach-fp}
	W^0 = \Tseta{} \text{ and } W^{i+1} = \left(\FCpre{}(W^i)\cap \Ra{}\right) \cup \Tseta{}
\end{equation}
until some iteration $N\in \mathbb{N}$ is reached where $W^N = W^{N+1}$. 
We extract the reachability controller $C = (\Uc,\Ua{},G)$ with $\Uc = W^N$ and
% RM : reverting this: 
\begin{equation}\label{equ:reach-controller}
	G(\xa) = \begin{cases}
	          \SetComp{\ua\in \Ua{}}{ \Fa{}(\xa,\ua) \subseteq W^{i*}}, & \xa\in \Uc\setminus \Tseta{}\\
	          \Ua, &\text{else,}
	         \end{cases}
\end{equation}
% \begin{equation}\label{equ:reach-controller}
% 	G(\xa) = \set{\ua\in \Ua{} \mid \Fa{}(\xa,\ua) \subseteq W^{i*}}
% \end{equation}
% for all $\xa\in \Uc\setminus \Tseta{}$,
where $i^* = \min(\{i \mid \xa\in W^i\setminus \Tseta{}\})- 1$.
% if \new{$\xa\in \Uc\setminus \Tseta{}$ and $G(\xa)=\emptyset$ if $\xa\in\Tseta{}$}, 

% We extract the reachability controller $C = (\Uc,\Ua{},G)$ with $\Uc = W^N\setminus \Tseta{}$ and
% \begin{equation}\label{equ:reach-controller}
% 	G(\xa) = \set{\ua\in \Ua{} \mid \Fa{}(\xa,\ua) \subseteq W^{i*}}
% \end{equation}
% for all $\xa\in \Uc$, where $i^* = \min(\{i \mid \xa\in W^i\setminus \Tseta{}\})- 1$.

Note that the safety-part of the specification is taken care of by only keeping those states in $\FCpre{}$ that intersect $\Rseta{}$. 
So, intuitively, the fixed-point in \eqref{equ:reach-fp} iteratively enlarges the target state set while always remaining within the safety constraint. 
We define the procedure implementing the iterative computation of \eqref{equ:reach-fp} until convergence by $\REACH_\infty(\Tseta{},\Rseta{},\Sa{})$. 
We will also use a version of $\REACH$ which runs $m$ steps of \eqref{equ:reach-fp} for a parameter $m\in\mathbb{Z}_{>0}$. 
Here, we can again obtain $\REACH_\infty(\Tseta{},\Rseta{},\Sa{})$ by chaining $\REACH_m$ computations, 
i.e., given $W^m$ computed by $\REACH_m(\Tseta{},\Rseta{},\Sa{})$, one obtains $W^{2m}$ from $\REACH_m(W^m,\Rseta{},\Sa{})$, if no fixed-point is reached beforehand.

% Formally, the algorithm $\REACH_m(\Tseta{l},\Oa{l},l)$ returns the set $W^m$ (the result of the $m$-th iteration of \eqref{equ:reach-fp})
% and the corresponding controller $C$ computed using $W^m$ in \eqref{equ:reach-controller}.
% Note that if the iteration in \eqref{equ:reach-fp} reaches a fixed point in at most $m$ iterations,
% $\REACH_\infty$ and $\REACH_m$ return the same result.

% \smallskip
% \noindent\textbf{A Common Perspective.}
% Summarizing the above discussion, we see that both control problems are soved for an abstract system $\Sa{}$ by a fixed-point algorithm 
% run until convergence, resulting in a sound controller. 
% However, completeness is not guaranteed; there may exist a state $x\in X$ s.t. $\Qa{l}(x)\not\in \Uc$, and there may exist a controller $C' = (\Uc',\Ua{},G')$, solving the control problem $\tuple{\Sigma,\Spec}$ s.t.\ $\Qa{l'}(x)\in \Uc'$ with $l'<l$. We will therefore utilize multi-layered abstractions in the next section to ensure relative compleness w.r.t.\ the finest layer. 

\subsection{Multi-Resolution Fixed-Points for Multi-Layered ABCS}\label{sec:synthesis:ML}

Next, we present a controller synthesis algorithm which computes a multi-layered abstract controller $\Cset$ solving the safety and reach-avoid control problems $\tuple{\Saset{},\Specs}$ and $\tuple{\Saset{},\Specr}$ over a sequence of $L$ abstract systems $\Saset:=\{\Sa{l}\}_\layer$. 
Here, synthesis will perform the iterative computations $\SAFE$ and $\REACH$
from \REFsec{sec:synthesis:FP} at each layer, but also switch between abstraction layers during this computation. To avoid notational clutter, we write $\SAFE(\cdot{}, l)$, $\REACH_{\cdot}(\cdot, \cdot, l)$ to refer to
$\SAFE(\cdot, \Sa{l})$, $\REACH_\cdot(\cdot,\cdot, \Sa{l})$ within this procedure.
% We write $\Rseta{l}$, $\Tseta{l}$ for the under-approximations of $\Rset{}$ and $\Tset{}$ in $\Sa{l}$.

The core idea that enables switching between layers during successive steps of the fixed-point iterations 
are the saving and re-loading of the computed winning states to and from the lowest layer $l=1$ 
(indicated in green in the subsequently discussed algorithms). 
This projection is formalized by the  operator 
	\begin{equation}\label{equ:Gamma}
	 		\UA{ll'}(\Upsilon_{l'}) = 
			\begin{cases}
				\Ra{ll'}(\Upsilon_{l'}), & l \leq l' \\
				\SetComp{\hat{x} \in \Xa{l}}{\Ra{l'l}(\hat{x})\subseteq \Upsilon_{l'}}, & l > l' 
			\end{cases}
	\end{equation}
where $l,l'\in [1;L]$ and $\Upsilon_{l'}\subseteq \Xa{l'}$. 
The operation $\UA{ll'}(\Upsilon_{l'}) \subseteq \Xa{l}$ under-approximates a set $\Upsilon_{l'} \subseteq \Xa{l'}$
with one in layer $l$. 
 
% In addition to soundness, we ensure that $\Cset$ is \emph{realive complete w.r.t.\ the lowest layer}, i.e., for all $ $
% To avoid notational clutter, we write $\SAFE(\cdot{}, l)$, $\REACH_m(\cdot, \cdot, l)$, etc.\ for
% $\SAFE(\cdot, \Sa{l})$, $\REACH_m(\cdot,\cdot, \Sa{l})$, etc.
% In this section, we shall assume that each $\Fa{l}$ is pre-computed for all states within $\Rseta{l}$ in every \layer.
% In \REFsec{sec:synthesis:Lazy}, we shall compute $\Fa{l}$ lazily.
In this section, we shall assume that each $\Fa{l}$ is pre-computed for all states within $\Rseta{l}$ in every \layer.
In \REFsec{sec:synthesis:Lazy}, we shall compute $\Fa{l}$ lazily.

\smallskip
\noindent\textbf{Safety Control.}
We consider the computation of a multi-layered safety controller $\Cset\in\WIN(\Saset{},\Specs)$ by the iterative function 
$\OTFAS$ in \REFalg{alg:SafeIt} assuming that $\Saset$ is pre-computed. 
We refer to this scenario by the wrapper function $\MLSAFE{}(\Rseta{1},L)$, which calls the iterative algorithm 
$\OTFAS$ with parameters $(\Rseta{1},\emptyset, L, \emptyset)$.
For the moment, assume that the $\COMPUTE$ method in 
line~\ref{alg:SafeIt:Explore} does nothing (i.e., the gray lines of \REFalg{alg:SafeIt} are ignored in the execution).

\begin{algorithm}[t]
	\caption{\textcolor{violet}{$\OTFAS$}}\label{alg:SafeIt}
	\begin{algorithmic}[1]
		\Require $\Psi\subseteq \Xa{1}$, $\Aux\subseteq \Xa{1}$, $l$, $\Cset$
                 \State \textcolor{gray}{$\COMPUTE{}(\UA{l1}(\Psi)\setminus\UA{l1}(\Aux),l)$} \label{alg:SafeIt:Explore}
		\State 	$W \gets \textcolor{blue}{\SAFE}(\textcolor{green!60!black}{\UA{l1}}(\Psi),l)$   \label{alg:SafeIt:computeW}
% 		\State $W \gets \textproc{CSafe}(\UA{l1}(\Psi),l)$\label{alg:SafeIt:computeW}
		\State $\Cset\gets\Cset\cup\{C_l \gets (W,\Ua{},\emptyset)\}$\label{alg:SafeIt:safeB} // store the controller domain, but not moves
% 		\State $\Cset \gets \Cset \cup \set{C}$ 	% For non-ranked set of controllers
		\State $\Aux \gets \Aux\cup\textcolor{green!60!black}{\UA{1l}}(W)$\label{alg:SafeIt:safeW}
		\If{$l\neq 1$} // go finer
% 			\State $l \gets l-1$
% 			\State \textcolor{gray}{$\EXPLORE_m(\Psi,l-1)$}
			\State $\langle \Psi,\Cset \rangle\gets\textcolor{violet}{\OTFAS}(\Psi,\Aux,l-1,\Cset)$ 
			\State \Return $\langle \Psi,\Cset \rangle$
		\Else
			\If{$\Psi \neq \Aux$}
% 				\State $\Psi \gets \Aux$
% 			        \State \textcolor{gray}{$\EXPLORE_m(\Psi,l-1)$}
				\State $\langle \Psi,\Cset \rangle\gets\textcolor{violet}{\OTFAS}(\Aux,\emptyset,L,\emptyset)$ // start new iteration\label{alg:SafeIt:iterate}
				\State \Return $\langle \Psi,\Cset \rangle$
			\Else
				\State \textcolor{purple}{\Return $\langle \Psi,\Cset \rangle$} // terminate \label{alg:SafeIt:termination}
			\EndIf
		\EndIf
	\end{algorithmic}
\end{algorithm}

When initialized with $\OTFAS{}(\Rseta{1},\emptyset,L,\emptyset)$, \REFalg{alg:SafeIt} performs the following computations.
It starts in layer $l=L$ with an outer recursion count $i=1$ (not shown in \REFalg{alg:SafeIt}) 
and reduces $l$, one step at the time, until $l=1$ is reached.
Upon reaching $l=1$, it starts over again from layer $L$ with recursion count $i+1$ and a new safe set $\Aux$. 
In every such iteration $i$, one step of the safety fixed-point is performed for every layer and the resulting set is stored in the layer 
$1$ map $\Aux\subseteq \Xa{1}$, whereas $\Psi\subseteq \Xa{1}$ keeps the knowledge of the previous iteration. 
If the finest layer ($l=1$) is reached and we have $\Psi=\Aux$, the algorithm terminates. 
Otherwise $\Aux$ is copied to $\Psi$, $\Aux$ and $\Cset$ are reset to $\emptyset$ and $\OTFAS{}$ starts a new iteration (see Line~\ref{alg:SafeIt:iterate}).

After $\OTFAS$ has terminated, it returns a multi-layered controller $\Cset=\{\Ci{l}\}_\layer$ (with one controller per layer) which only contains the domains of the respective controllers $\Ci{l}$ (see Line~\ref{alg:SafeIt:safeB} in \REFalg{alg:SafeIt}). The transition functions $\Gci{l}$ are computed afterward by choosing one input $\ua\in\Ua$ for every $\xa\in \Uci{l}$ s.t.\ 
\begin{equation}\label{def:safecontrollerGci}
 \propImp{\ua=\Gci{l}(\xa)}{\Fa{l}(\xa,\ua)\subseteq \UA{l1}(\Psi)}.
\end{equation}

% It is easy to see that for $m=\infty$, $\textproc{EagerSafe}_\infty$ terminates after one iteration and we essentially recover the safety algortihm presented in \cite{HsuMajumdarMallikSchmuck_HSCC18}, Sec.~.
% % 
% To see the effect of reducing $m$, 
As stated before, the main ingredient for the multi-resolution fixed-point is that states encountered for layer $l$ in iteration $i$ are saved to the lowest layer $1$ (Line~\ref{alg:SafeIt:safeW}, green) and \enquote{loaded} back to the respective layer $l$ in iteration $i+1$ (Line~\ref{alg:SafeIt:computeW}, green). This has the effect that a state $\xa\in\Xa{l}$ with $l>1$, which was not contained in $W$ computed in layer $l$ and iteration $i$ via Line~\ref{alg:SafeIt:computeW}, might be included in $\UA{l1}(\Psi)$ loaded in the next iteration $i+1$ when re-computing Line~\ref{alg:SafeIt:computeW} for $l$. This happens if all states $x\in\xa$ were added to $\Aux$ by some layer $l'<l$ in iteration $i$. 
% 
% This allows the algorithm to \enquote{bridge} regions that require a finer grid and to use layer $L$ in all remaining regions of the state space. 
% 
% The latter is not true for the multi-layered safety algorithm given in \cite[Sec.~4.3]{hsu2018multi} as shown by the example in \cite[Sec.~I-A]{hsu2018lazy}. 

Due to the effect described above, the map $W$ encountered in Line~\ref{alg:SafeIt:computeW} for a particular layer $l$ throughout different iterations $i$ might not be monotonically shrinking. However, the latter is true for layer $1$, which implies that $\MLSAFE{}(\Rseta{1},L)$ is sound and complete w.r.t.\ layer $1$ as formalized by \REFthm{thm:EagerSafe}. % (see \cite{hsu2018lazy} for details).

% \begin{lemma}\label{lem:monotonicity}
%  Let $\Psi^0:=\Rseta{1}$ and let $\OTFAS{}$ be called by $\MLSAFE{}(\Rseta{1},L)$, terminating after $N$ iterations. Further, set $\Psi^i:=\Aux$ whenever \REFalg{alg:SafeIt} reaches Line~\ref{alg:SafeIt:safeW} with $l=1$ for the $i$-th time. Then it holds that $\Psi^i\subseteq\Psi^{i-1}$, hence $\Psi^i\subseteq\Rseta{1}$ for all $i\leq N$.
% \end{lemma}

% This leads to the following claim, showing that $\MLSAFE{}(\Rseta{1},L)$ is sound and complete w.r.t.\ layer $1$.

\begin{theorem}[\cite{hsu2018lazy}]\label{thm:EagerSafe}
%Let $\tuple{\Saset{},\Specs}$ be a safety control problem associated with the safe set $R\subset Y$. 
%% and $\Saset = \{\Sa{l}\}_\layer$ a sequence of abstractions.
%Let $\langle \Psi^N,\Cset \rangle=\MLSAFE{}(\Rseta{1},L)$ s.t.\ $\Cset=\{\Ci{l}\}_\layer$ and $\Gci{l}$ is defined as in \eqref{def:safecontrollerGci} for all \layer. 
%Further, let $C = (\overline{\Uc},\Ua{},G)$ be returned by $\SAFE_\infty(\Rseta{1},1)$.
%% $\overline{\Uc}$ be the domain of the single-layer safety controller computed using \eqref{equ:safe-fp} for $l=1$.
%Then $\Cset\in\WIN(\Saset{},\Specs)$ and $\overline{\Uc}\subseteq\Aux^N$, i.e. $\MLSAFE{}$ is sound and complete w.r.t.\ layer $l=1$. 
	$\MLSAFE{}$ is sound and complete w.r.t.\ layer $1$.
\end{theorem}

It is important to mention that the algorithm $\MLSAFE{}$ is presented only to make a smoother transition to the lazy ABCS for safety (to be presented in the next section).
In practice, $\MLSAFE{}$ itself is of little algorithmic value as it is always slower than $\SAFE(\cdot, \Sa{1})$, but produces the same result.
This is because in $\MLSAFE{}$, the fixed-point computation in the finest layer does not use the coarser layers' winning domain in any meaningful way.
So the computation in all the layers---except in $\Sa{1}$---goes to waste.

\smallskip
\noindent\textbf{Reach-Avoid Control.}
We consider the computation of an abstract multi-layered reach-avoid controller $\Cset\in\WIN(\Saset{},\Specr)$ by the iterative function 
$\OTFAR$ in \REFalg{alg:OTFA subroutine} assuming that $\Saset$ is pre-computed.
We refer to this scenario by the wrapper function $\mathsf{EagerReach}(\Tseta{1},\Rseta{1},L)$, which calls 
$\OTFAR$ with parameters 
$(\Tseta{1}, \Rseta{1}, L, \emptyset)$.
% RM reverting
% \new{$(\Tseta{1}, \Rseta{1}, L, \Ci{1})$, 
% where $\Ci{1}=(\Tseta{1},\Ua{},\Gi{1}:\Tseta{1}\rightarrow\emptyset)$ is a layer $1$ controller which defines the empty imput set for all states in the target set implementing the else-brance of \eqref{equ:reach-controller}}.
%
Assume in this section that $\COMPUTE$ and $\EXPLORE_m$ do not modify anything (i.e., the gray lines of \REFalg{alg:OTFA subroutine} are ignored in the execution).

\begin{algorithm}[t!]
	\caption{\textcolor{violet}{$\OTFAR_{m}$}}\label{alg:OTFA subroutine}
	\begin{algorithmic}[1]
% 		\Require $T\subseteq \Xa{l}$, $\Aux\subseteq \Xa{1}$, $l$, $\Cset$
		\Require $\Aux\subseteq \Xa{1}$,$\Psi\subseteq \Xa{1}$, $l$, $\Cset$
		\If{$l=L$}\label{alg:if-start}
		        \State \textcolor{gray}{$\COMPUTE{}(\UA{l1}(\Psi),l)$} \label{alg:Reach:Explore}
			\State $\langle W,C \rangle \gets \textcolor{blue}{\REACH_\infty}(\textcolor{green!60!black}{\UA{l1}}(\Aux),\UA{l1}(\Psi),l)$   \label{alg:full reach}
			\State $\Cset \gets \Cset \cup \set{C}$ 	% For non-ranked set of controllers
			\State $\Aux \gets \Aux \cup \textcolor{green!60!black}{\UA{1l}}(W)$ // save $W$ to $\Aux$ \label{alg:safe1}
%			\State $Push(\Cset,C)$	% For ranked set of controllers
			\If{$L=1$} \qquad // single-layered reachability
				\State \textcolor{purple}{\Return $\langle \Aux,\Cset \rangle$} \label{alg:L=1}
			\Else \qquad // go finer
% 				\State $W \gets \Gamma_{(l-1)l}(W) \cup \Gamma_{(l-1)1}(\Aux)$ \new{// Load $\Aux$} \label{alg:load1}
% 				 \State $\Aux \gets \Aux \cup \Gamma_{1l}(W)$ \new{// Save $W$ in $\Aux$} \label{alg:save1}
% 				\State \textcolor{gray}{$\EXPLORE_m(\Aux,l-1)$}
				\State $\langle \Aux,\Cset \rangle \gets \textcolor{violet}{\OTFAR_{m}}\left(\Aux,\Psi,l-1,\Cset\right)$ \label{alg:go finer 1}
				\State \Return $\langle \Aux,\Cset \rangle$
			\EndIf
		\Else\label{alg:else-start}
		        \State \textcolor{gray}{$\EXPLORE_m(\Aux,l)$}
			\State $\langle W,C \rangle \gets \textcolor{blue}{\REACH_m}(\textcolor{green!60!black}{\UA{l1}}(\Aux),\UA{l1}(\Psi),l)$ \label{alg:compute reach}
			\State $\Cset \gets \Cset \cup \set{C}$ % For non-ranked set of controllers
			\State $\Aux \gets \Aux \cup \textcolor{green!60!black}{\UA{1l}}(W)$ // save $W$ to $\Aux$ \label{alg:safe2}
%			\State $Push(\Cset,C)$	 % For ranked set of controllers
			\If{Fixed-point is reached in line \ref{alg:compute reach}} \label{alg:checkFP}
				\If{$l = 1$} // finest layer reached \label{alg:terminate}
					\State \textcolor{purple}{\Return $\langle \Aux,\Cset \rangle$}\label{alg:l=1}
				\Else \qquad // go finer
% 					\State \textcolor{gray}{$\EXPLORE_m(\Aux,l-1)$}
% 					\State $W \gets \Gamma_{(l-1)l}(W) \cup \Gamma_{(l-1)1}(\Aux)$ \new{// Load $\Aux$} \label{alg:load 2}
% 					\State $\Aux \gets \Aux \cup \Gamma_{1l}(W)$ \new{// Save $W$ in $\Aux$} \label{alg:save2}
					\State $\langle \Aux,\Cset \rangle \gets \textcolor{violet}{\OTFAR_{m}}(\Aux,\Psi,l-1,\Cset)$	\label{alg:go finer 2}
					\State \Return $\langle \Aux,\Cset \rangle$
				\EndIf
			\Else \qquad // go coarser \label{alg:go coarser begin}
% 					\State \textcolor{gray}{$\EXPLORE_m(\Aux,l+1)$}
% 				\State $\Aux \gets \Aux \cup \Gamma_{1l}(W)$ \new{// Save $W$ in $\Aux$} \label{alg:save3}
				\State $\langle \Aux,\Cset \rangle \gets \textcolor{violet}{\OTFAR_{m}}(\Aux,\Psi,l+1,\Cset)$	\label{alg:go coarser}
				\State \Return $\langle \Aux,\Cset \rangle$	\label{alg:go coarser end}
			\EndIf
		\EndIf\label{alg:else-end}
	\end{algorithmic}
\end{algorithm}

The recursive procedure $\OTFAR_m$ in \REFalg{alg:OTFA subroutine} implements the switching protocol informally discussed in \REFsec{sec:Introduction}.
Lines \ref{alg:if-start}--\ref{alg:else-start} implement the fixed-point computation at the coarsest layer $\Sa{L}$ 
by iterating the fixed-point over $\Sa{L}$ until convergence (line \ref{alg:full reach}). 
Afterward, $\OTFAR_m$ recursively calls itself (line \ref{alg:go finer 1}) 
to see if the set of winning states ($W$) can be extended by a lower abstraction layer.
Lines \ref{alg:else-start}--\ref{alg:else-end} implement the fixed-point computations in layers $l < L$ 
by iterating the fixed-point over $\Sa{l}$ for $m$ steps (line \ref{alg:compute reach}) 
for a given fixed parameter $m>0$.
If the analysis already reaches a fixed point, then, as in the first case, the algorithm $\OTFAR_m$ recursively calls itself 
(line \ref{alg:go finer 2}) to check if further states can be added in a lower layer.
If no fixed-point is reached in line \ref{alg:compute reach}, more states could be added in the current layer by running $\REACH$ 
for more then $m$ steps. 
However, this might not be efficient (see the example in \REFsec{sec:Introduction}). 
The algorithm therefore attempts to go coarser when recursively calling itself (line~\ref{alg:go coarser}) 
to expand the fixed-point in a coarser layer instead. 
Intuitively, this is possible if states added by lower layer fixed-point computations have now \enquote{bridged} 
a region where precise control was needed and can now be used to enable control in coarser layers again. 
This also shows the intuition behind the parameter $m$. 
If we set it to $m=1$, the algorithm might attempt to go coarser before this \enquote{bridging} is completed. 
The parameter $m$ can therefore be used as a tuning parameter to adjust the frequency of such attempts and is only needed in layers $l<L$.
The algorithm terminates if a fixed-point is reached in the lowest layer (line \ref{alg:L=1} and line \ref{alg:l=1}). 
In this case the layer $1$ winning state set $\Aux$ and the multi-layered controller $\Cset$ is returned.

It was shown in \cite{hsu2018multi} that this switching protocol ensures that $\mathsf{EagerReach}_m$ is 
sound and complete w.r.t.\ layer $1$. 

\begin{theorem}[\cite{hsu2018multi}]\label{thm:EagerReach}
%Let $\tuple{\Saset{},\Specr}$ be a reach-avoid control problem associated with the safe set $R\subset Y$ and the target set $T\subseteq R$.
%%Let $\langle \Aux^N,\Cset \rangle=\mathsf{EagerReach}_m(\Tseta{1},\Rseta{1},L)$. 
%Further, let $C = (\overline{\Uc},\Ua{},G)$ be returned by $\REACH_\infty(\Tseta{1},\Rseta{1},1)$ and  $\langle \Aux^N,\Cset \rangle=\mathsf{EagerReach}_m(\Tseta{1},\Rseta{1},L)$ for some $m>0$.
%Then $\Cset\in\WIN(\Saset{},\Specr)$ and $\overline{\Uc}\subseteq\Aux^N$, i.e. $\mathsf{EagerReach}_m$ is sound and complete w.r.t.\ single-layered control for layer $l=1$. 
% Then 
% \begin{inparaenum}[(a)]
% 	\item $\Cset\in\WIN(\Sigma,\Specr)$: referred to as $\Cset$ being sound, and 
% 	\item $\overline{\Uc}\subseteq\Aux^N$: referred to as $\Cset$ being relatively complete w.r.t.\ single-layered control for layer $l=1$. 
% \end{inparaenum}
	$\mathsf{EagerReach}_m$ is sound and complete w.r.t.\ layer $1$.
\end{theorem}

% \smallskip
% \noindent\textbf{Summary}
% While both algorithms \AKS{todo}

\vspace*{-0.5cm}
\noindent\begin{minipage}[t]{0.45\linewidth}
\begin{algorithm}[H]
	\caption{$\COMPUTE{}$}\label{alg:Compute}
	\begin{algorithmic}[1]
		\Require $\Aux\subseteq \Xa{l}$, $l$
		\For {$\xa\in\Aux,\ua\in\Ua$}
		\If {$\Fa{l}(\xa,\ua)$ is undefined}
		\State compute $\Fa{l}(\xa,\ua)$ as in \eqref{eq:next state abs sys 0}
		\EndIf
		\EndFor
	\end{algorithmic}
\end{algorithm}
\end{minipage}
\hfill
\begin{minipage}[t]{0.45\linewidth}
\begin{algorithm}[H]
	\caption{$\EXPLORE_m$}\label{alg:EXPLORE subroutine}
	\begin{algorithmic}[1]
		\Require $\Aux\subseteq \Xa{1}$, $l$
		\State $W' \gets \JUpre{l}^{m}(\OA{L1}(\Aux)) \setminus \UA{L1} (\Aux)$\label{alg:overapprox states} %. Both $\JUpre{\eta_{l_1}\tau_{l_2}}^{p}(\cdot)$ and $\OA{L1}(\cdot)$ potentially include some states occupied by obstacles in layer $L$. 
		% Version with general over approximation operator between any two layers 
%		\State $W' \gets \Upre_{p}(\Gamma_{L,1}^\uparrow(\Aux)) \setminus \Gamma_{L,1} (\Gamma_{L,1}^\uparrow(\Aux))$ where $Upre_{p}(\cdot)$ is the set of predecessor states in layer $L$ which are at most $p$ steps away, and $\Gamma_{l_2l_1}(\cdot)$ is the outer-approximation of a set of states in layer $l_1$ to layer $l_2$. Both $\Upre_{p}(\cdot)$ and $\Gamma_{l_2l_1}^\uparrow$ potentially include some states occupied by obstacles in layer $L$. \label{alg:overapprox states} \new{NEW} 
		\State $W'' \gets \UA{lL}(W')$ \label{alg:candidate states for explore}
% 		\State Compute $\Fa{l}$ as in \eqref{eq:next state abs sys 0} for all $\xa\in W''\setminus \Oseta{l}$ and $\tau=\tau_l$.
		\State $\COMPUTE{}(W''\cap\Rseta{l},l)$\label{alg:partial abs}
	\end{algorithmic}
\end{algorithm}
\end{minipage}

\subsection{Lazy Exploration within Multi-Layered ABCS}\label{sec:synthesis:Lazy}

We now consider the case where the multi-layered abstractions $\Saset$ are computed lazily. 
Given the multi-resolution fixed-points discussed in the previous section, this 
requires tightly over-approximating the region of the state space which might be explored by $\REACH$ or $\SAFE$ 
in the current layer, called the \emph{frontier}. 
Then abstract transitions are only constructed for \emph{frontier states} and the currently considered layer $l$ via 
\REFalg{alg:Compute}.
As already discussed in \REFsec{sec:Introduction}, the computation of \emph{frontier states} differs for safety 
and reachability objectives.

\smallskip
\noindent\textbf{Safety Control.}
We now consider the \emph{lazy} computation of a multi-layered safety controller $\Cset\in\WIN(\Saset,\Specs)$. 
We refer to this scenario by the wrapper function $\MLASAFE(\Rseta{1},L)$ which simply calls $\OTFAS{}(\Rseta{1},\emptyset,L,\emptyset)$.

This time, Line~\ref{alg:SafeIt:Explore} of \REFalg{alg:SafeIt} is used to explore transitions.
The frontier cells at layer $l$ are given by $\mathcal{F}_l=\UA{l1}(\Psi)\setminus\UA{l1}(\Aux)$.
The call to $\COMPUTE{}$ in \REFalg{alg:Compute} updates the abstract transitions for the frontier cells.
In the first iteration of $\OTFAS{}(\Rseta{1},\emptyset,L,\emptyset)$, 
we have $\Psi=\Rseta{1}$ and $\Aux=\emptyset$.
Thus, $\mathcal{F}_L=\UA{1L}(\Rseta{1})=\Rseta{L}$, and hence, for layer $L$, 
all transitions for states inside the safe set are pre-computed in the first iteration of \REFalg{alg:SafeIt}.
%  and $\COMPUTE{}$ does not perform any computations for layer $L$ in further iterations.
In lower layers $l<L$, the frontier $\mathcal{F}_l$ defines all states which are 
(i) not marked unsafe by all layers in the previous iteration, i.e., are in $\UA{l1}(\Psi)$, but 
(ii) cannot stay safe for $i$ time-steps in any layer $l'>l$, 
i.e., are not in $\UA{l1}(\Aux)$. 
Hence, $\mathcal{F}_l$ defines a small boundary around the set $W$ computed in 
the previous iteration of $\SAFE$ in layer $l+1$ 
(see \REFsec{sec:Introduction} for an illustrative example of this construction). 

It has been shown in \cite{hsu2018lazy} that  all states which need to be checked for safety in layer $l$ 
of iteration $i$ are indeed explored by this frontier construction. This implies that \REFthm{thm:EagerSafe} directly transfers from $\MLSAFE$ to $\MLASAFE$.

\begin{theorem}
	$\MLASAFE$ is sound and complete w.r.t.\ layer $1$.
\end{theorem}

\smallskip
\noindent\textbf{Reach-Avoid Control.}
We now consider the lazy computation of a multi-layered reach-avoid controller $\Cset\in\WIN(\Saset,\Specr)$. 
We refer to this scenario by the wrapper function $\MLAR_m(\Tseta{1},\Rseta{1},L)$ which calls $\OTFAR_{m}(\Tseta{1},\Rseta{1},L,\emptyset)$.

In the first iteration of $\OTFAR_{m}$ we have the same situation as in $\MLASAFE$; given that $\Psi=\Rseta{1}$, line~\ref{alg:Reach:Explore} in \REFalg{alg:OTFA subroutine} pre-computes all transitions for states inside the safe set and $\COMPUTE{}$ does not perform any computations for layer $L$ in further iterations. For $l<L$ however, the situation is different. As $\REACH$ computes a smallest fixed-point, it iteratively enlarges the set $\Tseta{1}$ (given when $\OTFAR$ is initialized). Computing transitions for all not yet explored states in every iteration would therefore be very wasteful (see the example in \REFsec{sec:Introduction}). 
Therefore, $\EXPLORE_m$ determines an over-approximation of the frontier states instead in the following manner: 
it computes the predecessors (not controllable predecessors!)
of the already-obtained set $\Aux$ optimistically by (i) using (coarse) auxiliary abstractions for this computation and 
(ii) applying a \emph{cooperative predecessor} operator.

This requires a set of auxiliary systems, given by
\begin{equation}
 \Aaset = \{\Aa{l}\}_{l=1}^L,\qquad\Aa{l}:=\Sa{}(\Sigma,\tau_l,\eta_L) = (\Xa{L},\Ua,\Ja{l}).
\end{equation}
The abstract system $\Aa{l}$ induced by $\Sigma$ captures the $\tau_l$-duration transitions in 
the coarsest layer state space $\Xa{L}$.
Using $\tau_l$ instead of $\tau_L$ is important, 
as $\tau_L$ might cause \enquote{holes} between the computed frontier 
and the current target $\Upsilon$ which cannot be bridged by a shorter duration control actions in layer $l$.
This would render $\MLAR_{m}$ unsound.
Also note that in $\EXPLORE_m$, we do not restrict attention to the safe set. 
This is because $\Ra{l}\supseteq\Ra{L}$, and when the inequality is strict then 
the safe states in layer $l$ which are possibly winning but are covered 
by an obstacle in layer $L$ (see \REFfig{fig:informal example}) can also be explored.

For $\Aux\subseteq\Xa{L}$ and $l\in[1;L]$, we define the \emph{cooperative predecessor} operator
\begin{align}
\JUpre{l}(\Aux)= \set{\xa\in\Xa{L} \mid \exists \ua\in\Ua{}\;.\;\Ja{l}(\xa,\ua)\cap \Aux \neq \emptyset}. \label{eq:define upre}
\end{align}
in analogy to the controllable predecessor operator in \eqref{eq:define cpre}.
We apply the cooperative predecessor operator $m$ times in $\EXPLORE_m$, i.e.,
\begin{align*}\label{eq:def upre^r}
&\JUpre{l}^1(\Aux)  = \JUpre{l}(\Aux)~\text{and}~\\ 
&\JUpre{l}^{j+1}(\Aux)  = \JUpre{l}^{j}(\Aux) \cup \JUpre{l}(\JUpre{l}^j(\Aux)). \numberthis
\end{align*}
Calling $\EXPLORE_m$ with parameters $\Aux\subseteq \Xa{1}$ and $l<L$ applies $\JUpre{l}^m$ to the over-approximation of $\Aux$ by abstract states in layer $L$. 
This over-approximation is defined as the dual operator of the under-approximation operator $\UA{ll'}$:
\begin{align}\label{eq:def over approx}
	\OA{ll'}(\Upsilon_{l'}) := \begin{cases}
								\Ra{ll'}(\Upsilon_{l'}), & l \leq l' \\
								\SetComp{\hat{x} \in \Xa{l}}{\Ra{l'l}(\hat{x})\cap \Upsilon_{l'} \neq \emptyset}, & l > l' 
							\end{cases}
\end{align}
where $l,l'\in [1;L]$ and $\Upsilon_{l'}\subseteq \Xa{l'}$. 
Finally, $m$ controls the size of the frontier set and determines the maximum progress that can be made in a single backwards synthesis run in a layer $l < L$.

It can be shown 
%% in the Appendix 
that all states which might be added to the winning state set in the 
current iteration are indeed explored by this frontier construction, implying that 
$\MLAR_m(\Tseta{1},\Rseta{1},L)$ is sound and complete w.r.t.\ layer $1$. 
In other words, \REFthm{thm:EagerReach} can be transfered from $\MLREACH_m$ to $\MLAR_{m}$.
%%  (see the Appendix for details).

\begin{theorem}\label{thm:s&c lazyreach}
	$\MLAR_m$ is sound and complete w.r.t.\ layer $1$.
\end{theorem}

\section{Experimental Evaluation}

We have implemented our algorithms in the MASCOT tool and we present some brief evaluation.\footnote{
Available at \url{http://mascot.mpi-sws.org/}.}

\subsection{Reach-Avoid Control Problem for a Unicycle}
% \begin{figure}[t]
% 	\centering
% 	\subfigure[Winning region for the unicycle avoiding obstacles (black) and reaching the target (red). 
%                   Controller domains synthesized in $l =1,2,3$ are shown in yellow, purple and green respectively. The black line shows a simulated trajectory.]{
% 		\includegraphics[width=5cm]{./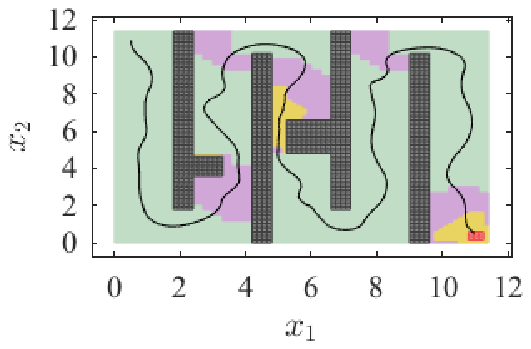}
% 	}
% 	\subfigure[Cells of the finest layer ($l=1$) for which transitions were computed during $\MLAR_{2}$ are marked in red. For $\MLREACH_{2}$, all uncolored cells would also be red.] {
% 		\includegraphics[width=5cm]{./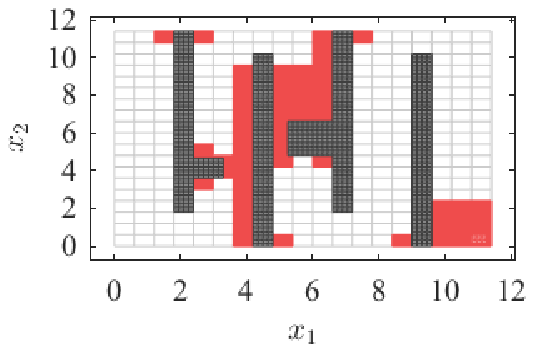}
% 	}
% 	\vspace{-0.2cm}
% 	\caption{Solution of the unicycle reach-avoid problem by $\MLAR_{2}$.}
% 	\label{fig:hsccexample}
% 	%\vspace{-0.2cm}
% \end{figure}
% 
\begin{figure}[t]
  \begin{tikzpicture}[auto,scale=1]
%    \begin{scriptsize}
%   \includegraphics[width=0.3\textwidth]{}
    \node[inner sep=0pt] (example) at (0,0) {\includegraphics[width=.49\textwidth]{./Figures/unicyclePlot.eps}};
%   \begin{pgfonlayer}{foreground}
   \node (a) at ($(example.north west)+(0.1,-0.3)$) {$(a)$};
%   \end{pgfonlayer}
  
%    \end{scriptsize}
  \end{tikzpicture}\hfill
    \begin{tikzpicture}[auto,scale=1]
%    \begin{scriptsize}
%   \includegraphics[width=0.3\textwidth]{spiral_cdc.eps}
    \node[inner sep=0pt] (example) at (0,0) {\includegraphics[width=.49\textwidth]{./Figures/T3.eps}};
%   \begin{pgfonlayer}{foreground}
   \node (a) at ($(example.north west)+(0.1,-0.3)$) {$(b)$};
%   \end{pgfonlayer}
  
%    \end{scriptsize}
  \end{tikzpicture}
	\vspace{-0.5cm}
	\caption{(a) Solution of the unicycle reach-avoid problem by $\MLAR_{2}$. (b) Cells of the finest layer ($l=1$) for which transitions were computed during $\MLAR_{2}$ are marked in red. For $\MLREACH_{2}$, all uncolored cells would also be red. 
	}
\label{fig:hsccexample}
\end{figure}

We use a nonlinear kinematic system model commonly known as the \textit{unicycle model}, specified as
\begin{align*}
	\dot{x}_1 \in u_1\cos(x_3)	+ W_1\quad
	\dot{x}_2 \in u_1\sin(x_3)	+ W_2\quad
	\dot{x}_3 = u_2
	%\vspace*{-0.5cm}
\end{align*}
where $x_1$ and $x_2$ are the state variables representing 2D Cartesian coordinates, $x_3$ is
a state variable representing the angular displacement, 
$u_1$, $u_2$ are control input variables that influence the linear and angular velocities respectively, 
and $W_1$, $W_2$ are the perturbation bounds in the respective dimensions 
given by $W_1 = W_2 = [-0.05, 0.05]$. 
The perturbations render this deceptively simple problem computationally intensive. 
We run controller synthesis experiments for the unicycle inside a two dimensional space with obstacles and a 
designated target area, as shown in \REFfig{fig:hsccexample}.
We use three layers for the multi-layered algorithms $\MLR$ and $\MLAR$.
All experiments presented in this subsection were performed on a Intel Core i5 3.40 GHz processor.

\smallskip
\noindent\textbf{Algorithm Comparison.}
%% We use the synthesis problem as presented in \cite{hsu2018multi} for a comparison of four 
Table~\ref{table:REACH} shows a comparison on the $\REACH$, $\MLR_2$, and $\MLAR_{2}$ algorithms. 
% The simulated state trajectory based on the controllers obtained from $\MLAR_{2}$ is shown in \REFfig{fig:hsccexample}a. 
The projection to the state space of the transitions constructed by $\MLAR_{2}$ for the finest abstraction is shown in 
\REFfig{fig:hsccexample}b. 
The corresponding visualization for $\MLR_{2}$ would show all of the uncolored space being covered by red. 
The savings of $\MLAR_{2}$ over $\MLR_2$ can be mostly attributed to this difference.

% We also investigated the performance of $\MLAR_{m}$ as we vary $m\geq 2$ and found the run time for the abstraction
% to increase monotonically with increasing $m$ .
% % The results are shown in \REFfig{fig:hyperparameter}. The monotonic behavior suggests that 
% Thus, the cost of switching (i.e. projecting sets of states) between layers is negligible 
% compared to the savings from the minimization of the exploration done in the finest layer.

\vspace*{-0.3cm}
\noindent\begin{minipage}[t]{0.5\linewidth}
\begin{table}[H]
	\centering
	\caption{Comparison of running times (in seconds) of reachability algorithms on the perturbed unicycle system.}
	\label{table:REACH}
	\begin{scriptsize}
		\begin{tabular}{lrrrr} \toprule
			
			% & \multicolumn{4}{c}{Runtime of different algorithms (in seconds)} \\ % \cmidrule(r){2-5}
			& $\quad \REACH$ & $\quad \MLREACH_2$ & $\quad \MLAR_{2}$ \\ \midrule
			
			Abstraction & 2590 & 2628 & 588 \\
			Synthesis & 818 & 73  & 21 \\
			
			\midrule
			Total &  3408 & 2701 & 609  \\
			& (126\%)  & (100\%) & (22.5\%)\\
			
			\bottomrule		
		\end{tabular}
	\end{scriptsize}
	%\vspace{-0.5cm}
\end{table}
\end{minipage}
\hfill
\begin{minipage}[t]{0.45\linewidth}
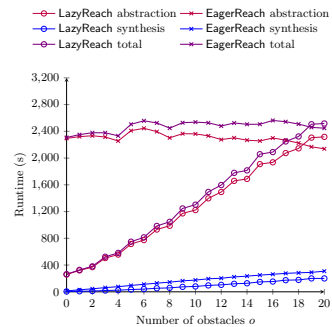
\begin{figure}[H]
\begin{tikzpicture}[scale=0.5,point/.style={circle,scale=0.2,draw=black,fill=red,thick}]	
	\begin{axis}[
		xlabel= Number of obstacles $o$,
		ylabel= Runtime (s),
		axis y line=left,
		axis x line=bottom,
		xmin=0, xmax=20,
		ymin=0, ymax=3200,
		xtick={0,2,4,6,8,10,12,14,16,18,20},
		ytick={0,400,800,1200,1600,2000,2400,2800,3200},
		xticklabel=$\pgfmathprintnumber{\tick}$,
		yticklabel=$\pgfmathprintnumber{\tick}$,
		legend columns=2,
		legend style={at={(-0.15, 1.35)},anchor=north west,draw=none},
		legend cell align={left}
        ]
        \addplot[purple, mark=o]  %[sharp plot] % abstraction construction
        	coordinates {
        		(0, 258.7)
        		(1, 319.9)
        		(2, 369.2)
        		(3, 504.9)
        		(4, 556.9)
        		(5, 715.4)
        		(6, 773.8)
        		(7, 931.7)
        		(8, 987.4)
        		(9, 1172.8)
        		(10, 1222.9)
        		(11, 1397.9)
        		(12, 1492.6)
        		(13, 1659.3)
        		(14, 1688.5)
        		(15, 1909.87)
        		(16, 1936.44)
        		(17, 2074.58)
        		(18, 2145.23)
        		(19, 2305.66)
        		(20, 2317.3)
        	};
			\addlegendentry{$\MLAR$ abstraction}
		\addplot[purple, mark=x]  %[sharp plot] % abstraction construction
            coordinates {
            	(0, 2294.42)
            	(1, 2320.97)
            	(2, 2334.09)
            	(3, 2315.2)
            	(4, 2256.18)
            	(5, 2410.43)
            	(6, 2445.3)
            	(7, 2396.04)
            	(8, 2303.2)
            	(9, 2365.08)
            	(10, 2361.76)
            	(11, 2332.85)
            	(12, 2276.78)
            	(13, 2302.51)
            	(14, 2267.68)
            	(15, 2256.3)
            	(16, 2300.15)
            	(17, 2267.99)
            	(18, 2224.44)
            	(19, 2168.56)
            	(20, 2138.86)
            };
            \addlegendentry{$\MLR$ abstraction}
        \addplot[blue, mark=o]%[sharp plot] % controller synthesis
        	coordinates {
        		(0, 4.6)
        		(1, 7.94)
        		(2, 11.7)
        		(3, 19.4)
        		(4, 24.3)
        		(5, 34.1)
        		(6, 41.0)
        		(7, 53.3)
        		(8, 58.8)
        		(9, 74.0)
        		(10, 80.5)
        		(11, 95.6)
        		(12, 104.2)
        		(13, 119.8)
        		(14, 127.7)
        		(15, 148.2)
        		(16, 153.5)
        		(17, 174.055)
        		(18, 177.763)
        		(19, 200.539)
        		(20, 200.08)
        	};
        	\addlegendentry{$\MLAR$ synthesis};
        \addplot[blue, mark=x]%[sharp plot] % controller synthesis
            coordinates {
            	(0, 13.1959)
            	(1, 30.2978)
            	(2, 45.0114)
            	(3, 62.825)
            	(4, 76.63)
            	(5, 95.1656)
            	(6, 112.505)
            	(7, 129.46)
            	(8, 144.514)
            	(9, 163.755)
            	(10, 175.585)
            	(11, 194.599)
            	(12, 202.9)
            	(13, 222.501)
            	(14, 234.388)
            	(15, 249.656)
            	(16, 262.43)
            	(17, 275.995)
            	(18, 284.01)
            	(19, 290.98)
            	(20, 308.32)
            };
            \addlegendentry{$\MLR$ synthesis};
        \addplot[violet, thick, mark=o]
        	coordinates { % total time
        		(0, 263.3)
        		(1, 327.8)
        		(2, 380.9)
        		(3, 524.3)
        		(4, 581.2)
        		(5, 749.5)
        		(6, 814.8)
        		(7, 985.0)
        		(8, 1046.8)
        		(9, 1246.8)
        		(10, 1303.4)
        		(11, 1493.5)
        		(12, 1596.8)
        		(13, 1779.0)
        		(14, 1816.2)
        		(15, 2058.1)
        		(16, 2089.9)
        		(17, 2248.6)
        		(18, 2322.99)
        		(19, 2506.2)
        		(20, 2517.38)
        	};
        	\addlegendentry{$\MLAR$ total};
        \addplot[violet, thick, mark=x]
            coordinates { % total time
				(0, 2307.62)
				(1, 2351.27)
				(2, 2379.10)
				(3, 2378.025)
				(4, 2332.81)
				(5, 2505.60)
				(6, 2557.81)
				(7, 2525.5)
				(8, 2447.714)
				(9, 2528.84)
				(10, 2537.35)
				(11, 2527.449)
				(12, 2479.68)
				(13, 2525.01)
				(14, 2505.068)
				(15, 2505.96)
				(16, 2562.58)
				(17, 2543.99)
				(18, 2508.45)
				(19, 2459.54)
				(20, 2447.18)
            };
            \addlegendentry{$\MLR$ total};
	\end{axis}
\end{tikzpicture}	
\vspace*{-0.2cm}
\caption{Runtime with increasing number of obstacles}
\label{fig:obstacles}
\end{figure}
\end{minipage}

%\vspace{-0.5cm}

% \smallskip
% \noindent\textbf{Hyperparameter Investigation.}

%\vspace{-0.5cm}
%\begin{figure}[H]
%	\centering
%	\subfigure[]{
%	\input{Figures/Fig_hyperparameter.tex}}
%	\vspace{-0.4cm}
%	\caption{Performance of $\MLAR_m$ with varying parameter $m$.}
%	\label{fig:hyperparameter}
%	\vspace{-0.4cm}
%\end{figure}

% \begin{figure}%[H]
% 	\centering
% 	\subfigure[t][Performance of $\MLAR_m$ with varying parameter $m$.]{
% 		\input{Figures/Fig_hyperparameter.tex}
% 		\label{fig:hyperparameter}
% 	}
% 	\subfigure[t][Performance of $\MLAR_{2}$ and $\MLR_2$ with varying number of obstacles in the state space.]{
% 		\input{Figures/Fig_obstacles.tex}
% 		\label{fig:obstacles}
% 	}
% 	\vspace{-0.2cm}
% 	\caption{Effect of parameters on ABCS performance.}
% 	%\vspace{-0.2cm}
% \end{figure}

\vspace{0.2cm}
\smallskip
\noindent\textbf{Varying State Space Complexity.}
We investigate how the lazy algorithm and the multi-layered baseline perform with respect to the structure of the state space,
achieved by varying the number of identical obstacles, $o$, placed in the open area of the state space. 
% This can also be seen as varying the non-uniformity of the state space. 
The runtimes for $\MLR_2$ and $\MLAR_{2}$ are plotted in \REFfig{fig:obstacles}. 
We observe that $\MLAR_{2}$ runs fast when there are few obstacles by only constructing the 
abstraction in the finest layer for the immediate surroundings of those obstacles. 
By $o = 20$, $\MLAR_{2}$ explores the entire state space in the finest layer, and its 
performance is slightly worse than that of $\MLR_2$ (due to additional bookkeeping). 
The general decreasing trend in the abstraction construction runtime for $\MLR_2$ 
is because transitions outgoing from obstacle states are not computed.

% \new{
% \noindent\textbf{Comparison with CoSyMA\cite{mouelhi2013cosyma}.}
% CoSyMA is a tool which implements a multi-abstraction algorithm for controller synthesis for \emph{unperturbed} nonlinear systems. 
% % Since the considered systems are unperturbed, 
% Hence, the tool enjoys the benefits of the forward-search based synthesis algorithms which are not applicable to \emph{perturbed} systems. However, our implementation (also applicable to perturbed systems) still slightly outperforms CoSyMA on an unperturbed dc-to-dc boost converter example ($205 s$ versus $235 s$\footnote{This experiment was performed with a 2.7 GHz Intel Core i5 processor and 16 GB memory. Details can be found in the MASCOT repository.}). 
% }

%\begin{figure}[H]
%	\centering
%	\input{Figures/Fig_obstacles.tex}
%	\vspace{-0.4cm}
%	\caption{Performance of $\MLAR_{2}$ and $\MLR_2$ with varying number of obstacles in the state space. \KH{Kaushik, do you know how to manually specify the line styles so that they have some consistency?}}
%	\label{fig:obstacles}
%	\vspace{-0.4cm}
%	
%\end{figure}

\subsection{Safety Control Problem for a DC-DC Boost Converter \cite{hsu2018lazy}} \label{sec:experiments}

We evaluate our safety algorithm on a benchmark DC-DC boost converter example from \cite{GirardPolaTabuada_2010,mouelhi2013cosyma,Scots}. 
%
% \begin{wrapfigure}{R}{6cm}
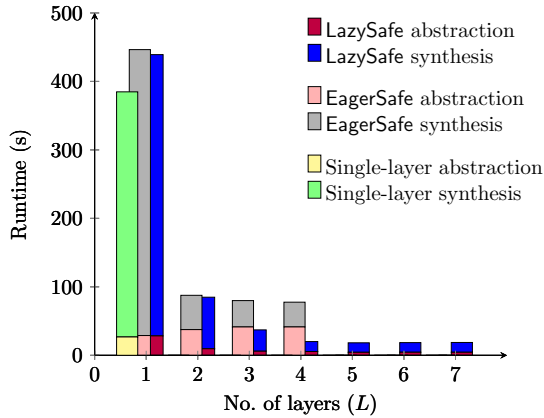
\begin{figure}[t]
	\centering
	\begin{tikzpicture}[scale=.8,point/.style={circle,scale=0.2,draw=black,fill=red,thick}]	
	\begin{axis}[
	    	legend cell align={left},
		ybar stacked,
		bar shift=3pt,
		xlabel= No. of layers ($L$),
		ylabel= Runtime (s),
		axis y line=left,
		axis x line=bottom,
		xmin=0, xmax=8,
		ymin=0, ymax=500,
		xtick={0,1,2,3,4,5,6,7},
		ytick={0,100,200,300,400,500},
		xticklabel=$\pgfmathprintnumber{\tick}$,
		yticklabel=$\pgfmathprintnumber{\tick}$,
		legend style={at={(0.5,1)},anchor=north west,draw=none},
%		legend cell align={left}
        ]
        \addplot[fill=purple]	%[purple, mark=o]  %[sharp plot] % abstraction construction
        	% kyle's computer
%        	coordinates {
%        		(1, 47.0586)
%        		(2, 16.1325)
%        		(3, 9.26071)
%        		(4, 8.37352)
%        		(5, 7.47225)
%        		(6, 7.4323)
%        		(7, 7.71333)	
%        	};
			% kaushik's computer
			coordinates {
        		(1, 28.355)
        		(2, 9.59962)
        		(3, 5.98247)
        		(4, 5.54938)
        		(5, 4.39672)
        		(6, 4.48306)
        		(7, 4.45758)	
        	};
			\addlegendentry{$\MLASAFE{}$ abstraction}
		
        \addplot[fill=blue]	%[blue, mark=o]%[sharp plot] % controller synthesis
        	% kyle's computer
%        	coordinates {
%        		(1, 725.964)
%        		(2, 124.467)
%        		(3, 41.9844)
%        		(4, 27.458)
%        		(5, 25.577)
%        		(6, 25.4604)
%        		(7, 25.751)
%        	};
			% kaushik's computer
			coordinates {
        		(1, 410.783)
        		(2, 75.2443)
        		(3, 31.1342)
        		(4, 14.5017)
        		(5, 13.8379)
        		(6, 14.0392)
        		(7, 14.2794)	
        	};
        	\addlegendentry{$\MLASAFE{}$ synthesis};
        
	\end{axis}
	
	\begin{axis}[
	    	legend cell align={left},
		ybar stacked,
		bar shift=-3pt,
		xlabel= No. of layers ($L$),
		ylabel= Runtime (s),
		axis y line=left,
		axis x line=bottom,
		xmin=0, xmax=8,
		ymin=0, ymax=500,
		xtick={0,1,2,3,4,5,6,7},
		ytick={0,100,200,300,400,500},
		xticklabel=$\pgfmathprintnumber{\tick}$,
		yticklabel=$\pgfmathprintnumber{\tick}$,
		legend style={at={(0.5,0.8)},anchor=north west,draw=none},
%		legend cell align={left}
        ]
        
        \addplot[fill=red!30]	%[purple, mark=x]  %[sharp plot] % abstraction construction
        	% kyle's computer
%            coordinates {
%            	(1, 51.2219)
%            	(2, 62.1284)
%            	(3, 67.7567)
%            	(4, 69.59)
%            	(5, 0)
%            	(6, 0)
%            	(7, 0)
%            };
            % kaushik's computer
			coordinates {
        		(1, 28.8522)
        		(2, 37.5347)
        		(3, 41.3946)
        		(4, 41.415)
        		(5, 0)
        		(6, 0)
        		(7, 0)	
        	};
            \addlegendentry{$\MLSAFE{}$ abstraction}
            
        \addplot[fill=black!30]	%[blue, mark=x]%[sharp plot] % controller synthesis
        	 % kyle's computer
%            coordinates {
%            	(1, 736.071)
%            	(2, 87.5163)
%            	(3, 62.1463)
%            	(4, 59.3368)
%            	(5, 0)
%            	(6, 0)
%            	(7, 0)
%            };
			% kaushik's computer
			coordinates {
        		(1, 417.597)
        		(2, 50.0761)
        		(3, 38.5002)
        		(4, 36.1657)
        		(5, 0)
        		(6, 0)
        		(7, 0)	
        	};
            \addlegendentry{$\MLSAFE{}$ synthesis};
	\end{axis}
	
	\begin{axis}[
	    	legend cell align={left},
		ybar stacked,
		bar shift=-9pt,
		xlabel= No. of layers ($L$),
		ylabel= Runtime (s),
		axis y line=left,
		axis x line=bottom,
		xmin=0, xmax=8,
		ymin=0, ymax=500,
		xtick={0,1,2,3,4,5,6,7},
		ytick={0,100,200,300,400,500},
		xticklabel=$\pgfmathprintnumber{\tick}$,
		yticklabel=$\pgfmathprintnumber{\tick}$,
		legend style={at={(0.5,0.6)},anchor=north west,draw=none},
%		legend cell align={left}
        ]
        \addplot[fill=yellow!50]	%[purple, mark=o]  %[sharp plot] % abstraction construction
			% kaushik's computer
			coordinates {
        		(1, 26.7736)
        		(2, 0)
        		(3, 0)
        		(4, 0)
        		(5, 0)
        		(6, 0)
        		(7, 0)	
        	};
			\addlegendentry{Single-layer abstraction}
		
        \addplot[fill=green!50]	%[blue, mark=o]%[sharp plot] % controller synthesis
			% kaushik's computer
			coordinates {
        		(1, 357.999)
        		(2, 0)
        		(3, 0)
        		(4, 0)
        		(5, 0)
        		(6, 0)
        		(7, 0)	
        	};
        	\addlegendentry{Single-layer synthesis};
        
	\end{axis}
\end{tikzpicture}	
	\caption{Run-time comparison of $\MLASAFE{}$ and $\MLSAFE{}$ on the DC-DC boost converter example. $L > 4$ is not used for $\MLSAFE{}$ since coarser layers fail to produce a non-empty winning set. The same is true for $L > 7$ in $\MLASAFE{}$.}	
	\label{fig:runtimes}
	\vspace{-0.5cm}
\end{figure}
% \end{wrapfigure}
%
The system $\Sigma$ is a second order differential inclusion $\dot{X}(t) \in A_pX(t) + b + W$ with two switching modes $p\in \set{1,2}$, where
\begin{align*}
	b = \begin{bmatrix}
		\frac{v_s}{x_l}\\
		0
	\end{bmatrix}, 
	A_1 = \begin{bmatrix}
		-\frac{r_l}{x_l}  &  0\\
		0 				  &  -\frac{1}{x_c}\frac{r_0}{r_0+r_c}
	\end{bmatrix},
	A_2 = \begin{bmatrix}
		-\frac{1}{x_l}(r_l+\frac{r_0r_c}{r_0+r_c})	&	\frac{1}{5}(-\frac{1}{x_l}\frac{r_0}{r_0+r_c})\\
		5\frac{r_0}{r_0+r_c}\frac{1}{x_c}			&	-\frac{1}{x_c}\frac{1}{r_0+r_c}
	\end{bmatrix},
\end{align*}
with $r_0 = 1$, $v_s = 1$, $r_l = 0.05$, $r_c = 0.5r_l$, $x_l = 3$, $x_c = 70$ and $W = [-0.001, 0.001]\times [-0.001, 0.001]$. 
A physical and more detailed description of the model can be found in \cite{GirardPolaTabuada_2010}. 
The safety control problem that we consider is given by $\langle \Sigma,\Specs \rangle$, where $\Specs = always( [1.15,1.55]\times [5.45,5.85])$. 
We evaluate the performance of our $\MLASAFE{}$ algorithm on this benchmark and compare it to $\MLSAFE{}$ and a single-layered baseline. 
% \begin{inparaenum} [(a)]
% \item to the one presented in \cite{hsu2018multi}, Sec.~4.3 which we call $\MLSAFE{}$, and 
% \item to the single-layered version of \texttt{SCOTS}. \end{inparaenum} 
For $\MLASAFE{}$ and $\MLSAFE{}$, we vary the number of layers used. 
The results are presented in \REFfig{fig:runtimes}.
In the experiments, the finest layer is common, and is parameterized by $\eta_1 = [0.0005, 0.0005]$ and $\tau_1 = 0.0625$.
The ratio between the grid parameters and the sampling times of the successive layers is $2$.

From \REFfig{fig:runtimes}, we see that $\MLASAFE{}$ is significantly faster than both $\MLSAFE{}$ (and the single-layered baseline) as $L$ increases.
The single layered case ($L=1$) takes slightly more time in both $\MLASAFE{}$ and $\MLSAFE{}$ due to the extra bookkeeping in the multi-layered algorithms. 
In \REFfig{fig:dcdc}, we visualize the domain of the constructed transitions and the synthesized controllers in each layer for $\MLASAFE{}(\cdot,6)$. 
The safe set is mostly covered by cells in the two coarsest layers. 
This phenomenon is responsible for the computational savings over $\MLASAFE{}(\cdot,1)$.

In contrast to the reach-avoid control problem for a unicycle, in this example, synthesis takes significantly longer time than the abstraction.
To reason about this difference is difficult, because the two systems are completely incomparable, and the abstraction parameters are very different.
Still we highlight two suspected reasons for this mismatch:
\begin{inparaenum}[(a)]
	\item Abstraction is faster because of the lower dimension and smaller control input space of the boost converter, 
	\item A smaller sampling time ($0.0625s$ as compared to $0.225s$ for the unicycle) in the 
	finest layer of abstraction for the boost converter results in slower convergence of the fixed-point iteration.
\end{inparaenum}

\begin{figure}[t]
	\centering
	\includegraphics[width=1\columnwidth,scale=0.5]{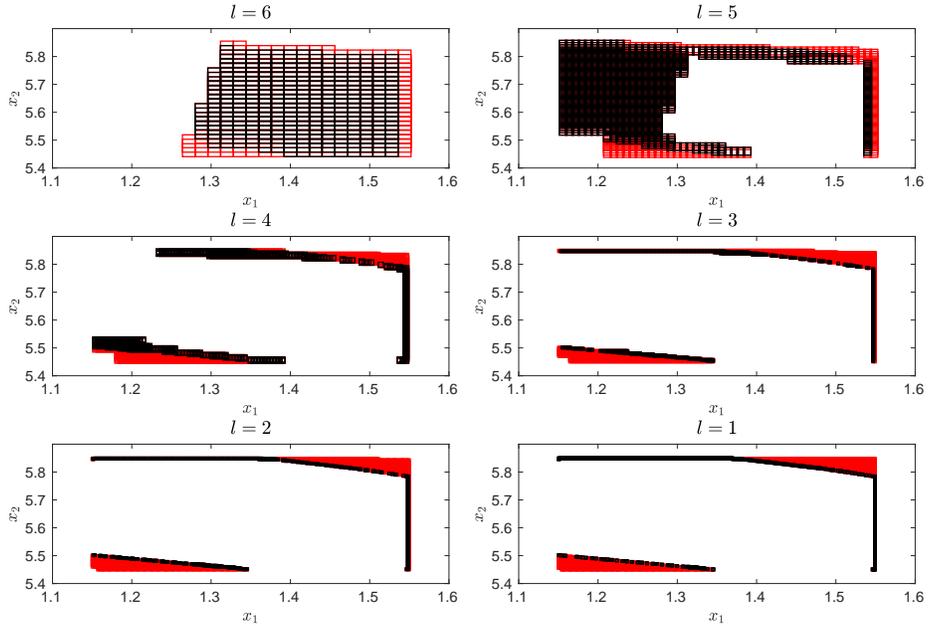}
	\vspace{-0.3cm}
	\caption{Domain of the computed transitions (union of red and black region) and the synthesized controllers (black region) for the DC-DC boost converter example, computed by $\MLASAFE{}(\cdot,6)$.}
	\label{fig:dcdc}
\end{figure}

\section{Conclusion}

ABCS is an exciting new development in the field of formal synthesis of cyber-physical systems. 
We have summarized a multi-resolution approach to ABCS.  Fruitful avenues for future work include designing scalable and robust tools, and combining basic algorithmic techniques with structural heuristics or orthogonal techniques (e.g. those based on data-driven exploration).

\bibliographystyle{abbrv}
{%\tiny
\bibliography{reportbib}}
% 
%\appendix
% \small
% \clearpage
\section*{Appendix: Proof of \REFthm{thm:s&c lazyreach}}
% \section{Properties of Under and Over-approximation}\label{app:prperties of gammas}
First we state some properties of $\UA{ll'}(\cdot)$ and $\OA{ll'}(\cdot)$. Let $A_{l'}, B_{l'}\subseteq \Xa{l'}$ be any two sets. Then\\
\begin{inparaenum}[(a)]
	\item $\OA{ll'}(\cdot) = \OA{lk}(\OA{kl'}(\cdot))$ and $\UA{ll'}(\cdot) = \UA{lk}(\UA{kl'}(\cdot))$ for all $k$ s.t. \mbox{$\min(\set{l,l'})\leq k \leq \max(\set{l,l'})$}.	\label{item b00}\\
	\item $\UA{ll'}(\cdot)$ and $\OA{ll'}(\cdot)$ are monotonic, i.e. $A_{l'}\subseteq B_{l'} \Rightarrow \UA{ll'}(A_{l'})\subseteq \UA{ll'}(B_{l'})$ and $A_{l'}\subseteq B_{l'} \Rightarrow \OA{ll'}(A_{l'})\subseteq \OA{ll'}(B_{l'})$. \label{item b0}\\
	\item For $l\leq l'$, both $\UA{ll'}(\cdot)$ and $\OA{ll'}(\cdot)$ are closed under union and intersection. \label{item closure}\\%, i.e. $\UA{ll'}(A_{l'}\cup B_{l'}) = \UA{ll'}(A_{l'})\cup \UA{ll'}(B_{l'})$ and $\OA{ll'}(\cdot)$
	\item $l \leq l' \Rightarrow \UA{ll'}(\cdot) \equiv \OA{ll'}(\cdot)$ %(immediately follows from the first lines of \eqref{equ:Gamma} and \eqref{eq:def over approx}). 
	\label{item b}\\
	\item $l\leq l' \Rightarrow \UA{l'l}(\UA{ll'}(A_{l'})) = \OA{l'l}(\UA{ll'}(A_{l'})) = A_{l'}$. Using \eqref{item b}, we additionally have $l\leq l' \Rightarrow \UA{l'l}(\OA{ll'}(A_{l'})) = \OA{l'l}(\OA{ll'}(A_{l'})) = A_{l'}$, i.e., when $l\leq l'$, the composition $\Gamma^*_{l'l} \circ \Gamma^*_{ll'}$ for $*\in \set{\uparrow,\downarrow}$ is the identity function.\\
	\item For all $x\in X$, $\Qa{l'}(x)\in A_{l'} \Rightarrow \Qa{l}(x)\in \OA{ll'}(A_{l'})$. Equivalently, for all $\xa'\in \Xa{l'}$, $\xa'\in A_{l'}\Rightarrow\Ra{ll'}(\xa')\in \OA{ll'}(A_{l'})$. \label{item e}\\
	\item For all $x\in X$, $\Qa{l}(x)\in \UA{ll'}(A_{l'}) \Rightarrow \Qa{l'}(x)\in A_{l'}$. Equivalently, for all $\xa\in \Xa{l}$, $\xa\in \UA{ll'}(A_{l'})\Rightarrow\Ra{l'l}(\xa)\in A_{l'}$. \label{item f}
\end{inparaenum}

Using (a)-(f), it immediately follows that
\begin{subequations}\label{eq dual}
	\begin{align}
		&l<l':	&&	\UA{ll'}(A_{l'}) \subseteq A_l \Rightarrow	A_{l'} \subseteq \OA{l'l}(A_l)	\label{eq dual a}\\
		&l>l':	&&	\UA{ll'}(A_{l'}) \subseteq A_l \Leftarrow	A_{l'} \subseteq \OA{l'l}(A_l)	\label{eq dual b}\\
		&\text{Always}:		&&	\UA{ll'}(A_{l'}) \supseteq A_l \Leftrightarrow	A_{l'} \supseteq \OA{l'l}(A_l) \label{eq dual c}
	\end{align}
\end{subequations}
where $A_l \subseteq \Xa{l}$. The implications in \eqref{eq dual a} and \eqref{eq dual b} are strict. %We omit the proofs. 
%Following are the proofs of the \eqref{eq dual}:
%\begin{proof}[Proof of \eqref{eq dual a}]
%	Let $\xa$
%\end{proof}

%\KM{will see later depending on the space whether to include the proof in the appendix or not. I am adding the proofs as handwritten notes (call scribbles) to the repository. You can have a look or ask me if in doubt. In case you are wondering why we need all the 6 properties and \eqref{eq dual a} and \eqref{eq dual b}(we don't use them in the soundness and completeness proof), the reason is that they have a chain-like dependency, which finally lead to \eqref{eq dual c} which is crucial for the completeness proof to hold.}

% \section{Proofs} \label{appendix:proof}

The soundness and relative completeness of $\MLAR_{m}$ follows from \REFthm{thm:EagerReach}, 
if we can ensure that in every iteration of $\MLAR_{m}$ the set of states 
returned by $\EXPLORE_m$, for which the abstract transition relation is computed, 
is \emph{not smaller} than the set of states subsequently added to $\Aux$ by $\REACH_m$ in the next iteration. 
We obtain this result by the following series of lemmata.

We first observe that computing the under-approximation of the $m$-step cooperative predecessor w.r.t.\ the auxiliary system $\Aa{l}$ of a set $\Aux_L$ (as used in $\EXPLORE_m$) over-approximates the set obtained by computing the $m$-step cooperative predecessor w.r.t.\ the abstract system $\Sa{l}$ for a set $\Aux_l$ (as used in $\REACH_m$) if $\Aux_L$ over-approximates $\Aux_l$. 

\begin{lemma}\label{lem:containment}
Let $\Saset$ be a multi-layered abstract system satisfying Assumption~\ref{assumption:monotonic approximation}, and let $\Aux_l\subseteq \Xa{l}$ and $\Aux_L\subseteq \Xa{L}$ for some $l<L$ s.t.\ $\Aux_L \supseteq \OA{Ll}(\Aux_l)$. Then $\UA{lL}(\JUpre{l}(\Aux_L)) \supseteq \FUpre{l}(\Aux_l)$. Furthermore, for all $m>0$, it holds that $\UA{lL}(\JUpre{l}^m(\Aux_L)) \supseteq \FUpre{l}^m(\Aux_l)$, where $\FUpre{l}$ and $\FUpre{l}^m$ for $\Sa{l}$ are defined analogously to \eqref{eq:define upre} and \eqref{eq:def upre^r}, respectively.
\end{lemma}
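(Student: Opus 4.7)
I will prove the base case $m=1$ directly from the definitions and then extend it to arbitrary $m$ by induction, using the duality identities in \eqref{eq dual}.

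\textbf{Base case $m=1$.} Pick any $\xa \in \FUpre{l}(\Aux_l)$. By the characterization in item (f) (or equivalently via \eqref{eq dual c}), showing $\xa \in \UA{lL}(\JUpre{l}(\Aux_L))$ reduces to showing $\Ra{Ll}(\xa) \subseteq \JUpre{l}(\Aux_L)$. Since $l < L$, every layer-$L$ cell $\xa_L \in \Ra{Ll}(\xa)$ is a coarse cell containing $\xa$ as one of its layer-$l$ sub-cells. By definition of the controllable predecessor in $\Sa{l}$, there exists an input $\ua \in \Ua$ with $\Fa{l}(\xa,\ua) \subseteq \Aux_l$. The cooperative (joint) nature of $\JUpre{l}$ on the auxiliary system $\Aa{l}$ means that, to witness membership of $\xa_L$ in $\JUpre{l}(\Aux_L)$, it suffices to exhibit a layer-$l$ sub-cell of $\xa_L$ and an input whose layer-$l$ forward image lies inside (the coarse lift of) $\Aux_L$. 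The choice $(\xa,\ua)$ above will serve. Indeed, by item (b) (monotonicity of $\OA{Ll}$) we get $\OA{Ll}(\Fa{l}(\xa,\ua)) \subseteq \OA{Ll}(\Aux_l)$, and by the hypothesis $\Aux_L \supseteq \OA{Ll}(\Aux_l)$ this image lies inside $\Aux_L$. Hence $\xa_L \in \JUpre{l}(\Aux_L)$, as required.

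\textbf{Inductive step.} Suppose the claim holds for $m-1$, so $\UA{lL}(\JUpre{l}^{m-1}(\Aux_L)) \supseteq \FUpre{l}^{m-1}(\Aux_l)$. Write $A_L' := \JUpre{l}^{m-1}(\Aux_L)$ and $A_l' := \FUpre{l}^{m-1}(\Aux_l)$. Applying the equivalence \eqref{eq dual c} to $\UA{lL}(A_L') \supseteq A_l'$, we get the dual statement $A_L' \supseteq \OA{Ll}(A_l')$. This is precisely the hypothesis of the base case instantiated with the pair $(A_l',A_L')$. Invoking the base case therefore yields $\UA{lL}(\JUpre{l}(A_L')) \supseteq \FUpre{l}(A_l')$, which unfolds to $\UA{lL}(\JUpre{l}^m(\Aux_L)) \supseteq \FUpre{l}^m(\Aux_l)$, completing the induction.

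\textbf{Main obstacle.} The only non-mechanical step is the base case, specifically the argument that a layer-$l$ witness $(\xa,\ua)$ for $\xa \in \FUpre{l}(\Aux_l)$ lifts to a layer-$L$ witness for $\xa_L \in \JUpre{l}(\Aux_L)$. This exploits two ingredients which must be matched carefully: (i) the cooperative semantics of $\Aa{l}$, which lets the controller choose both an input and the layer-$l$ sub-cell of $\xa_L$ from which successors are computed, and (ii) Assumption~\ref{assumption:monotonic approximation} together with the hypothesis $\Aux_L \supseteq \OA{Ll}(\Aux_l)$, which guarantees that successors of the fine sub-cell remain inside the coarse target. The routine parts (induction, applying monotonicity of the $\Gamma$-operators, and switching between the primal and dual formulations via \eqref{eq dual}) follow from the properties (a)--(f) listed at the start of the appendix.
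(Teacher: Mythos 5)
Your overall skeleton (a one-step base case proved from the definitions plus Assumption~\ref{assumption:monotonic approximation}, followed by an induction on $m$ that uses \eqref{eq dual c} to turn the induction hypothesis into the hypothesis of the base case) is exactly the paper's. However, your base case rests on a mischaracterization of the operator. In this paper $\Upre{}$ is the \emph{cooperative} predecessor: by \eqref{eq:define upre}, $\xa\in\FUpre{l}(\Aux_l)$ means only that $\Fa{l}(\xa)\cap\Aux_l\neq\emptyset$, i.e., \emph{some} input has \emph{some} successor in $\Aux_l$. (The controllable predecessor is the separate operator $\Cpre$, used elsewhere.) Your step ``by definition of the controllable predecessor there exists $\ua$ with $\Fa{l}(\xa,\ua)\subseteq\Aux_l$'' therefore does not follow, and as written your argument only covers the states in the controllable predecessor, a strict subset of $\FUpre{l}(\Aux_l)$; the lemma is claimed for all of $\FUpre{l}(\Aux_l)$. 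The repair is easy and in fact simplifies your argument: take the cooperative witness $\xa''\in\Fa{l}(\xa,\ua)\cap\Aux_l$, let $\ya$ be the layer-$L$ cell containing $\xa$, use Assumption~\ref{assumption:monotonic approximation} (with $\xa\subseteq\ya$) to get $\Fa{l}(\xa)\subseteq\Ja{l}(\ya)$, and conclude $\Ja{l}(\ya)\cap\Aux_L\neq\emptyset$ since $\Aux_l\subseteq\OA{Ll}(\Aux_l)\subseteq\Aux_L$ as subsets of $X$. This is the paper's argument; note it needs only intersection, never containment of the forward image.

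A second, smaller gap is in your inductive step. The iterated operator is not the $m$-fold composition: by \eqref{eq:def upre^r}, $\JUpre{l}^{m+1}(\cdot)=\JUpre{l}(\JUpre{l}^{m}(\cdot))\cup\JUpre{l}^{m}(\cdot)$, and likewise for $\FUpre{l}^{m+1}$. So after you apply the base case to the pair $(A'_l,A'_L)=(\FUpre{l}^{m}(\Aux_l),\JUpre{l}^{m}(\Aux_L))$, the containment you obtain does not ``unfold to'' the claim for $m+1$; you must additionally cover the second term of the union using the induction hypothesis itself, and you need that $\UA{lL}$ distributes over unions (property (\ref{item closure})) to split $\UA{lL}\bigl(\JUpre{l}(A'_L)\cup A'_L\bigr)$ into the two pieces. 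The paper's proof does exactly this in \eqref{proof eq a0}--\eqref{proof eq a}. Both gaps are repairable without new ideas, but as written neither the base case nor the inductive step establishes the stated lemma.
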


\begin{proof}[Proof of \REFlem{lem:containment}]
	Let $\xa\in \FUpre{l}(\Aux_l)$, which by definition \eqref{eq:define upre} implies that $\Fa{l}(\xa) \cap \Aux_l \neq \emptyset$.
	Let $\ya = \OA{Ll}(\set{\xa})$. 
	Then by observing that   $\xa \subseteq \ya$, and using Assump.~\ref{assumption:monotonic approximation}, we have that $\Fa{l}(\xa)\subseteq \Ja{l}(\ya)$, which implies that $\Ja{l}(\ya) \cap \Aux_L \neq \emptyset$ (since $\Aux_l\subseteq \Aux_L$).
	Hence, $\ya\in \JUpre{l}(\Aux_L)$.
	Moreover, using \eqref{eq dual c} we have that $\xa\in\UA{lL}(\set{\ya})$ which leads to $\xa\in \UA{lL}(\JUpre{l}^1(\Aux_L))$.

       The second claim is proven by induction on $m$. The base case for $m=1$ is given by the first claim proven above. Now assume that $\UA{lL}(\JUpre{l}^m(\Aux_L)) \supseteq \FUpre{l}^m(\Aux_l)$ holds for some $m>0$. This together with \eqref{eq dual c} implies:
	\begin{equation} \label{eq:proof ineq a}
		\JUpre{l}^m(\Aux_L) \supseteq \OA{Ll}(\FUpre{l}^m(\Aux_l)).
	\end{equation}
	 Now note that by \eqref{eq:def upre^r}, we have $\JUpre{l}^{m+1}(\cdot) = \JUpre{l}(\JUpre{l}^m(\cdot)) \cup \JUpre{l}^m(\cdot)$ and it holds that
	 \begin{align*}
	 	&\UA{lL}(\JUpre{l}^{m+1}(\Aux_L))\\
	 	 =& \UA{lL}\left(\JUpre{l}\left(\JUpre{l}^m(\Aux_L)\right) \cup \JUpre{l}^m(\Aux_L)\right) \\
	 						=& \UA{lL}\left(\JUpre{l}\left(\JUpre{l}^m(\Aux_L)\right)\right) \cup \UA{lL}\left(\JUpre{l}^m(\Aux_L)\right) \numberthis \label{proof eq a0}\\
	 						\supseteq& \FUpre{l}\left(\FUpre{l}^m(\Aux_l)\right) \cup \FUpre{l}^m(\Aux_l) \numberthis \label{proof eq a}
	 						=\FUpre{l}^{m+1}(\Aux_l),
	 \end{align*}
	 where \eqref{proof eq a0} follows from (\ref{item closure}) and \eqref{proof eq a} follows by applying the first claim twice: (i) for the left side of the ``$\cup$'', by replacing $\Aux_l$ and $\Aux_L$ in the first claim of \REFlem{lem:containment} by $\FUpre{l}^m(\Aux_l)$ and $\JUpre{l}^m(\Aux_L)$ respectively, while noting that \eqref{eq:proof ineq a} gives the necessary pre-condition, and (ii) for the right side of the ``$\cup$''.%, by using the first claim directly. %This establishes our claim.
\end{proof}

\REFlem{lem:containment} can be used to show that $\EXPLORE_m$ constructs the transition function $\Fa{l}(\xa,\ua)$ for all $\xa$ which are in the winning state set computed by $\REACH_m$.

\begin{lemma}\label{lem:single iteration}
%  Let $P$ denote the number of recursive calls of $\OTFAR_{m,r}$ needed in $\MLAR_{m,r}$ before $\OTFAR_{m,r}$ terminates, i.e., line~\ref{alg:terminate} becomes true. Suppose that for iteration $p\in[1;P]$ the algorithm $\OTFAR_{m,r}$ is called with $\Aux\in\Xl{1}$ and $l=\lp<L$, and hence line~\ref{alg:terminate} executes returning the controller $\Ci{p}=(\Uci{p},\Ua,\Gci{p})$. 
For all $l<L$, $\Aux\subseteq\Xa{1}$ and $C=(\Uci{},\Ua,\Gci{})$ returned by $\REACH_m(\UA{l1}(\Aux),\Oa{l},l)$, it holds that $\xa\in\Uci{}\setminus \UA{l1}(\Aux)$ implies $\xa\in W''$, where $W''$ is returned by the second line of $\EXPLORE_m(\Aux,l)$.
\end{lemma}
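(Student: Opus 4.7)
The plan is to compare the set $\Uci{}\setminus\UA{l1}(\Aux)$ of ``newly controlled'' layer-$l$ states produced by $\REACH_m$ against the set $W''$ that $\EXPLORE_m$ flags as needing transition computations, and to show the inclusion by moving through three set-valued maps: the controllable predecessor at layer $l$, the cooperative predecessor at layer $l$, and the cooperative predecessor of the auxiliary system $\Aa{l}$. The crucial ingredient will be \REFlem{lem:containment}, applied with $\Aux_l = \UA{l1}(\Aux)$ and $\Aux_L = \OA{Ll}(\UA{l1}(\Aux))$, whose hypothesis $\Aux_L \supseteq \OA{Ll}(\Aux_l)$ holds trivially.

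First I would read off what $\REACH_m$ and $\EXPLORE_m$ produce. The procedure $\REACH_m(\UA{l1}(\Aux),\Oa{l},l)$ iterates the controllable predecessor $\FCpre{l}$ of $\Sa{l}$ at most $m$ times, starting from the target set $\UA{l1}(\Aux)$ while staying inside $\Xa{l}\setminus\Oa{l}$. Consequently the controller domain satisfies
\begin{equation*}
\Uci{}\setminus\UA{l1}(\Aux)\;\subseteq\;\FCpre{l}^m\bigl(\UA{l1}(\Aux)\bigr)\setminus\UA{l1}(\Aux).
\end{equation*}
Since any controllable predecessor is a fortiori a cooperative predecessor (a winning control move witnesses, in particular, the existence of a disturbance realization that lands in the target), iterating this inclusion $m$ times gives
\begin{equation*}
\FCpre{l}^m\bigl(\UA{l1}(\Aux)\bigr)\;\subseteq\;\FUpre{l}^m\bigl(\UA{l1}(\Aux)\bigr).
\end{equation*}

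Next I would characterize $W''$. The second line of $\EXPLORE_m(\Aux,l)$ computes, on the auxiliary system $\Aa{l}$, the $m$-step cooperative predecessor of the over-approximated target $\OA{Ll}(\UA{l1}(\Aux)) = \OA{L1}(\Aux)$ (using properties~(\ref{item b00}) and~(\ref{item b}) of $\OA{}$ and $\UA{}$ listed at the start of the appendix), and then pulls the result back to layer $l$ via $\UA{lL}(\cdot)$. That is,
\begin{equation*}
W'' \;=\; \UA{lL}\bigl(\JUpre{l}^{m}\bigl(\OA{Ll}(\UA{l1}(\Aux))\bigr)\bigr).
\end{equation*}
With $\Aux_l := \UA{l1}(\Aux)$ and $\Aux_L := \OA{Ll}(\UA{l1}(\Aux))$, the hypothesis $\Aux_L \supseteq \OA{Ll}(\Aux_l)$ holds by construction (in fact with equality), so \REFlem{lem:containment} yields
\begin{equation*}
\FUpre{l}^m\bigl(\UA{l1}(\Aux)\bigr)\;\subseteq\;\UA{lL}\bigl(\JUpre{l}^m(\OA{Ll}(\UA{l1}(\Aux)))\bigr) \;=\; W''.
\end{equation*}
Chaining the three inclusions proves the claim.

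The main obstacle I anticipate is purely bookkeeping: making the characterizations of $\Uci{}$ and $W''$ precise, especially matching the target that $\EXPLORE_m$ actually uses to the target that appears in \REFlem{lem:containment}. For the target, the identity $\OA{L1}(\Aux)=\OA{Ll}(\UA{l1}(\Aux))$ is what makes the hypothesis of \REFlem{lem:containment} fit; for $\Uci{}$, the inclusion $\FCpre{l}\subseteq\FUpre{l}$ must be applied iteratively through the $m$ unrollings of $\REACH_m$, which requires a short induction mirroring the one in the proof of \REFlem{lem:containment}. Neither step needs new technical machinery beyond the monotonicity and composition properties (a)--(f) and \REFass{assumption:monotonic approximation} already invoked in that lemma.
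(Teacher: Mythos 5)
Your overall route coincides with the paper's: pass from the controllable predecessor $\FCpre{l}^m$ to the cooperative predecessor $\FUpre{l}^m$ at layer $l$, then use \REFlem{lem:containment} (with the inclusion $\OA{L1}(\Aux)\supseteq\OA{Ll}(\UA{l1}(\Aux))$ as its hypothesis) to transfer to the auxiliary system and pull the result back via $\UA{lL}$. There is, however, a genuine gap in your characterization of $W''$. The set computed by $\EXPLORE_m(\Aux,l)$ is not $\UA{lL}\bigl(\JUpre{l}^{m}(\OA{L1}(\Aux))\bigr)$ but $\UA{lL}\bigl(\JUpre{l}^{m}(\OA{L1}(\Aux))\setminus\UA{L1}(\Aux)\bigr)$: the exploration step deliberately excludes the states already recorded as winning, since their outgoing transitions need not be recomputed. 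Your chain of inclusions therefore only places $\xa$ in a superset of $W''$. To close the gap you must use the second half of the hypothesis, $\xa\notin\UA{l1}(\Aux)$, which appears in your first displayed inclusion but is never exploited afterwards: by property~(\ref{item f}) it yields $\Ra{Ll}(\xa)\notin\UA{L1}(\Aux)$, so $\Ra{Ll}(\xa)$ survives the set difference and lies in $W'=\JUpre{l}^{m}(\OA{L1}(\Aux))\setminus\UA{L1}(\Aux)$, whence $\xa\in\UA{lL}(W')=W''$. This is exactly the step the paper's proof supplies between its second and final display.

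A smaller inaccuracy: the identity $\OA{Ll}(\UA{l1}(\Aux))=\OA{L1}(\Aux)$ does not hold in general. The set $\OA{L1}(\Aux)$ contains every layer-$L$ cell meeting $\Aux$, whereas $\OA{Ll}(\UA{l1}(\Aux))$ contains only those meeting some layer-$l$ cell entirely contained in $\Aux$, so only ``$\supseteq$'' is true. Fortunately ``$\supseteq$'' is precisely the direction needed as the hypothesis of \REFlem{lem:containment} applied with $\Aux_l=\UA{l1}(\Aux)$ and $\Aux_L=\OA{L1}(\Aux)$ (combined with monotonicity, property~(\ref{item b0})), so this does not break the argument, but the claimed equality should be weakened to that inclusion --- which is the form the paper verifies.
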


\begin{proof}[Proof of \REFlem{lem:single iteration}]
By assumption, we have $\xa\in B\setminus \UA{l1}(\Aux)$ i.e. (i) $\xa\in B$ and (ii) $\xa\notin \UA{l1}(\Aux)$. Then it follows from (i) that
 \begin{align*}
 		\xa\in B%\\
		\Rightarrow \ \xa \in \FCpre{l}^m(\UA{l1}(\Aux))%\\
		\Rightarrow \ \xa\in \FUpre{l}^m(\UA{l1}(\Aux)).	%&&	[\because\FCpre{}^m(\cdot) \subseteq \FUpre{}^m(\cdot)].
	\end{align*}
	Consider the inequality $\OA{L1}(\Aux) \supseteq \OA{Ll}(\UA{l1}(\Aux))$ which can be verified from the properties (a)-(f). Then \REFlem{lem:containment} and (\ref{item f}) give
	\begin{align*}
		\xa\in \UA{lL}(\JUpre{l}^m(\OA{L1}(\Aux)))%\\
		\Rightarrow \Ra{Ll}(\xa)\in \JUpre{l}^m(\OA{L1}(\Aux)).		%&&	[\text{By Property } \ref{item f}]. 
		%\numberthis \label{eq xa in pre}
	\end{align*}
	 Now (ii) and (\ref{item f}) gives 
	 \begin{align*}
		\Ra{Ll}(\xa)\notin \UA{Ll}(\UA{l1}(\Aux))	%&&	[\text{By Property~\ref{item f}}]\\ 	 
		\Rightarrow \ \Ra{Ll}(\xa) \notin \UA{L1}(\Aux).	%&&	[\text{By Property~\ref{item b00}}].  
		%\numberthis\label{eq xa notin aux}
	 \end{align*}
	 Combining the last two observations with (\ref{item e}) and (\ref{item b0}) we get
	 \begin{align*}
	 	\Ra{Ll}(\xa)&\in \JUpre{l}^m(\OA{L1}(\Aux)) \setminus \UA{L1}(\Aux)
	 					 = W'\\ %&& [\text{as in \REFalg{alg:EXPLORE subroutine}}]\\
	 	\Rightarrow \ \xa &\in \OA{lL}(W')	%&&	[\text{by Property~\ref{item e}}]\\
	 	\Rightarrow \ \xa \in \UA{lL}(W')	%&&	[\text{by Property~\ref{item b0}}]\\
	 					 =W'', 				%&&	[\text{as in \REFalg{alg:EXPLORE subroutine}}].
	 \end{align*}
	 which proves the claim.
\end{proof}

With \REFlem{lem:single iteration}, soundness and relative completeness of $\MLAR_{m}$ directly follows from \REFthm{thm:EagerReach}, as shown in the following.
%
%\begin{theorem}\label{thm:LazyReach}
%Let $\tuple{\Sigma,\Specr}$ be a reach-avoid control problem and $\Saset = \set{\Sa{l}}_\layer$ a sequence of abstractions. 
%Let $\langle \Aux^N,\Cset \rangle=\mathsf{LazyReach}_m(\Tseta{1},\Oa{1},L)$. 
%Further, let $C = (B,\Ua{},G)$ be returned by $\REACH_\infty(\Tseta{1},\Oa{1},1)$.
%Then $\Cset$ is sound and relatively complete, in the sense of the statement of \REFthm{thm:EagerReach}, w.r.t.\ single-layer control for layer $l=1$. 
%\end{theorem}
%
%\begin{proof}[Proof of \REFthm{thm:LazyReach}]
	We build the proof on top of \REFthm{thm:EagerReach}. 
	We prove two things: that both algorithms terminate after the same depth of recursion $D$, and that the overall controller domain that we get from $\MLR_m$ is same as the one that we get from $\MLAR_{m}$, i.e. $\cup_{d\in[1;D]}\underline{B}^d = \cup_{d\in[1;D]}B^d$, where $\underline{B}^d$ and $B^d$ are the controller domains obtained in depth $d$ of the algorithms $\MLR_m$ and $\MLAR_{m}$ respectively. (We actually prove a stronger statement: for all $d\in[1;D]$, $\underline{B}^d = B^d$.) Then, since $\MLR_m$ is sound and complete w.r.t. $\REACH_\infty$, hence $\MLAR_{m}$ will also be sound and complete w.r.t. $\REACH_\infty$.
	
	The ``$\supseteq$'' direction of the second proof is trivial and is based on two simple observations: \begin{inparaenum}[(a)]\item the amount of information of the abstract transition systems $\Saset$ which is available to $\MLAR_{m}$ is never greater than the same available to $\MLR_m$; \item whenever $\MLAR_{m}$ invokes $\EXPLORE_m$ for computing transitions for some set of abstract states, $\EXPLORE_m$ returns the \emph{full information} of the outgoing transitions for those states to $\MLAR_{m}$. The second part is crucial, as partial information of outgoing transitions might possibly lead to false positive states in the controller domain. \end{inparaenum}  Combining these two arguments, we have that for all $d\in [1;D]$ $\underline{B}^d \supseteq B^d$. (We are yet to show that the maximum recursion depth is $D$ for both the algorithms $\MLR_m$ and $\MLAR_{m}$.)
	
	The other direction will be proven by induction on the depth of the recursive calls of the two algorithms. Let $\underline{l}_d$ and $l_d$ be the corresponding layers in depth $d$ of algorithm $\MLAR_{m}$ and $\MLR_m$ respectively.  It is clear that $\underline{B}^1 = B^1$ and $\underline{l}_1 = l_1 = L$ (induction base) since we start with full abstract transition system for layer $L$ in both cases. Let us assume that for some depth $d$, $\underline{B}^{d'} = B^{d'}$ and $\underline{l}_{d'} = l_{d'}$ holds true for all $d'\leq d$ (induction hypothesis). 
	Now in $\MLAR_{m}$, the check in Line~\ref{alg:checkFP} of $\MLAR_{m}$ is fulfilled iff the corresponding check in Line~15 of $\MLR_m$ (i.e. \cite[Alg.~1]{hsu2018multi}) is fulfilled. This means that $\underline{l}_{d+1} = l_{d+1}$. This shows by induction that \begin{inparaenum}[(a)] \item the maximum depth of recursion in $\MLAR_{m}$ and $\MLR_m$ are the same (call it $D$), and \item the concerned layer in each recursive call is the same for both algorithms. \end{inparaenum}
	
	Now in the beginning of depth $d+1$, we have that $\Aux = \cup_{d'\leq d} \UA{1l_{d'}}\underline{B}^{d'} = \cup_{d'\leq d} \UA{1l_{d'}}B^{d'}$. From now on, let's call $l_{d+1} = l$ for simpler notation. Let $\xa\in\Xa{l}$ be a state which was added in depth $d+1$ in the controller domain $\underline{B}^{d+1}$ for the first time, i.e. \begin{inparaenum}[(a)]\item $\xa \in \underline{B}^{d+1}$, and \label{part a} \item $\xa \in \UA{l1}(\Aux)$ \label{part b} \end{inparaenum}. Then by \REFlem{lem:single iteration} we have that $\xa \in W''$.
	
	Since $\EXPLORE_m$ also computes all the outgoing transitions from the states in $W''$ (Line~\ref{alg:partial abs} in \REFalg{alg:EXPLORE subroutine}), hence full information of the outgoing transitions of all the states which are added in $\underline{B}^{d+1}$ will be available to the $\MLAR_{m}$ algorithm in depth $d+1$. In other words given $\xa\in \Xa{l}$, if there is an $m$-step controllable path from $\xa$ to $\Aux$ in $\MLR_m$, there will be an $m$-step controllable path in $\MLAR_{m}$ as well. Hence $\xa$ will be added in $B^{d+1}$ as well. This proves that for all $d\in [1;D]$ $\underline{B}^d \subseteq B^d$.
%\end{proof}

\end{document}